\title{Efficiently Testing Simon's Congruence}
\author{Pawe\l{}  Gawrychowski}{University of Wroc\l{}aw, Faculty of Mathematics and Computer Science, Poland}{gawry@cs.uni.wroc.pl}{https://orcid.org/0000-0002-6993-5440}{}
\author{Maria Kosche}{G\"ottingen University, Computer Science Department, Germany}{maria.kosche@cs.uni-goettingen.de}{https://orcid.org/0000-0002-2165-2695}{}
\author{Tore Ko\ss }{G\"ottingen University, Computer Science Department, Germany}{tore.koss@cs.uni-goettingen.de}{https://orcid.org/0000-0001-6002-1581}{}
\author{Florin Manea }{G\"ottingen University, Computer Science Department and Campus-Institut Data Science, Germany}{florin.manea@cs.uni-goettingen.de}{https://orcid.org/0000-0001-6094-3324}{}
\author{Stefan Siemer}{G\"ottingen University, Computer Science Department, Germany}{stefan.siemer@cs.uni-goettingen.de}{https://orcid.org/0000-0001-7509-8135}{}
\authorrunning{P. Gawrychowski, M. Kosche, T. Ko\ss, F. Manea, S. Siemer} 
\keywords{Simon's congruence, Subsequence, Scattered factor, Efficient algorithms} 
\newcommand{\Adag}{a^{\dag}} 
\newcommand{\alp}{\text{alph}}
\newcommand{\len}[1]{|#1|}
\newcommand{\SF}{\ScatFact}
\newcommand{\w}{w}
\newcommand{\ww}{w'}
\newcommand{\SFdagi}{\SF_1(i,a^{\dag})}
\newcommand{\SFdagj}{\SF_1(j,a^{\dag})}
\newcommand{\maxkproblem}{\textsc{MaxSimK}}
\newcommand{\kdecision}{\textsc{SimK}}
\DeclareMathOperator{\splitchar}{\mathtt{split}}
\DeclareMathOperator{\nextpos}{\mathtt{next}}
\DeclareMathOperator{\prevpos}{\mathtt{prev}}
\DeclareMathOperator{\rightpos}{\mathtt{right}}
\DeclareMathOperator{\findNode}{\mathtt{findNode}}
\DeclareMathOperator{\splitNode}{\mathtt{splitNode}}
\DeclareMathOperator{\find}{\mathtt{find}}
\DeclareMathOperator{\union}{\mathtt{union}}
\DeclareMathOperator{\ScatFact}{Subseq}
\theoremstyle{plain}
\theoremstyle{definition}
\newtheorem{problem}{Problem}
\theoremstyle{remark}
\newtheorem{case}{Case}
\begin{document}

\colorlet{lipicsYellow}{white}
\maketitle

\begin{abstract}
Simon's congruence $\sim_k$ is a relation on words defined by Imre Simon in the 1970s and intensely studied since then. This congruence was initially used in connection to piecewise testable languages, but also found many applications in, e.g., learning theory, databases theory, or linguistics. The $\sim_k$-relation is defined as follows: two words are $\sim_k$-congruent if they have the same set of subsequences of length at most $k$. A long standing open problem, stated already by Simon in his initial works on this topic, was to design an algorithm which computes, given two words $s$ and $t$, the largest $k$ for which $s\sim_k t$. We propose the first algorithm solving this problem in linear time $O(|s|+|t|)$ when the input words are over the integer alphabet $\{1,\ldots,|s|+|t|\}$ (or other alphabets which can be sorted in linear time). Our approach can be extended to an optimal algorithm in the case of general alphabets as well. 

To achieve these results, we introduce a novel data-structure, called Simon-Tree, which allows us to construct a natural representation of the equivalence classes induced by $\sim_k$ on the set of suffixes of a word, for all $k\geq 1$. We show that such a tree can be constructed for an input word in linear time. Then, when working with two words $s$ and $t$, we compute their respective Simon-Trees and efficiently build a correspondence between the nodes of these trees. This correspondence, which can also be constructed in linear time $O(|s|+|t|)$, allows us to retrieve the largest $k$ for which $s\sim_k t$.
%
\end{abstract}

\newpage
\section{Introduction}
A subsequence of a word $w$
(also called scattered factor or subword, especially in automata and language theory)
is a word $u$ such that there exist (possibly empty) words $v_1, \ldots, v_{\ell+1}, $ $u_1, \ldots, 
u_\ell$ with $u = u_1 \ldots u_n$ and $w = v_1 u_1 v_2 u_2 \ldots v_\ell u_{\ell} v_{\ell+1}$.
Intuitively, the subsequences of a word $w$ are exactly those words
obtained by deleting some of the letters of $w$,
so, in a sense, they can be seen as lossy-representations of the word $w$.
Accordingly, subsequences may be a natural mathematical model for situations
where one has to deal with input strings with missing or erroneous symbols sequencing,
such as processing DNA data or digital signals~\cite{sankoff}.
Due to this very simple and intuitive definition
as well as the apparently large potential for applications,
there is a high interest in understanding the fundamental properties
that can be derived in connection to the sets of subsequences of words.
This is reflected in the consistent literature developed around this topic.
J. Sakarovitch and I. Simon in \cite[Chapter 6]{Loth97} overview some of the most important combinatorial and language theoretic properties of sets of subsequences.
The theory of subsequences was further developed in various directions, such as 
combinatorics on words, automata theory, formal verification, or string algorithms.
For instance, subword histories and Parikh matrices
(see, e.g.,~\cite{Mat04,Salomaa05,Seki12})
are algebraic structures in which the number of specific subsequences occurring in a word
are stored and used to define combinatorial properties of words.
Strongly related,
the binomial complexity of words is a measure of the multiset of subsequences that occur in a word,
where each occurrence of such a factor is considered as an element of the respective multiset;
combinatorial and algorithmic results related to this topic are obtained in,
e.g., \cite{RigoS15,FreydenbergerGK15,LeroyRS17a,Rigo19}, and the references therein.
Further, in~\cite{Zetzsche16,HalfonSZ17,KuskeZ19} various logic-theories were developed,
starting from the subsequence-relation,
and analysed mostly with automata theory and formal verification tools.
Last, but not least,
many classical problems in the area of string algorithms are related to subsequences:
the longest common subsequence
and the shortest common supersequence problems \cite{Maier:1978,DBLP:journals/tcs/Baeza-Yates91,DBLP:conf/fsttcs/BringmannC18,BringmannK18},
or the string-to-string correction problem \cite{Wagner:1974}.
Several algorithmic problems connected to subsequence-combinatorics are approached and (partially) solved in \cite{ElzingaRW08}.\looseness=-1

One particularly interesting notion related to subsequences
was introduced by Simon in~\cite{Simon72}.
He defined the relation $\sim_k$
(called now Simon's congruence)
as follows.
Two words are $\sim_k$-congruent
if they have the same set of subsequences of length at most $k$.
In~\cite{Simon72}, as well as in \cite[Chapter 6]{Loth97},
many fundamental properties
(mainly of combinatorial nature)
of $\sim_k$ are discussed;
this line of research was continued in,
e.g., \cite{KarandikarKS15,CSLKarandikarS,journals/lmcs/KarandikarS19,dlt2019,dlt2020}
where the focus was on the properties of some special classes of this equivalence.
From an algorithmic point of view,
a natural decision problem and its optimisation variant stand out:

\begin{problem}\kdecision\em{: }
	Given two words $s$ and $t$ over an alphabet $\Sigma$, with $|s|=n$ and $|t|=n'$, with $n\geq n'$,
	and a natural number $k$,
	decide whether $s \sim_k t$.
\end{problem}

\begin{problem}\maxkproblem\em{: }
	Given two words $s$ and $t$ over an alphabet $\Sigma$, with $|s|=n$ and $|t|=n'$, with $n\geq n'$,
	find the maximum $k$ for which $s \sim_k t$.
\end{problem}

The problems above were usually considered assuming
(although not always explicitly)
the Word RAM model with words of logarithmic size.
This is a standard computational model in algorithm design
in which, for an input of size $n$, the memory consists of memory-words consisting of $\Theta(\log n)$ bits.
Basic operations
(including arithmetic and bitwise Boolean operations)
on memory-words take constant time,
and any memory-word can be accessed in constant time.
In this model,
the two input words are just sequences of integers,
each integer stored in a single memory-word.
Without losing generality,
we can assume the alphabet to be simply $\{1,\ldots,n+n'\}$
by sorting and renaming the letters occurring in the input in linear time.
For a detailed discussion on the computational model, see \cref{sec:CompModel}.\looseness=-1

Due to the central role played by the $\sim_k$-congruence in the study of piecewise testable languages, as well as in the many other areas mentioned above, 
both problems \kdecision\ and \maxkproblem\ were considered highly interesting and studied thoroughly in the literature.

In particular, Hebrard \cite{TCS::Hebrard1991} presents \maxkproblem\ as computing a similarity measure between strings
and mentions a solution of Simon \cite{Simon_unpublished} for \maxkproblem\
which runs in $O(|\Sigma|nn')$
(the same solution is mentioned in \cite{garelCPM}).
He goes on and improves this
(see \cite{TCS::Hebrard1991})
in the case when $\Sigma$ is a binary alphabet:
given two bitstrings $s$ and $t$,
one can find the maximum $k$ for which $s \sim_k t$ in linear time.
The problem of finding optimal algorithms for \maxkproblem,
or even \kdecision,
for general alphabets was left open in \cite{Simon_unpublished,TCS::Hebrard1991}
as the methods used in the latter paper for binary strings did not seem to scale up.
In \cite{garelCPM}, Garel considers \maxkproblem\ and presents an algorithm based on finite automata,
running in $O(|\Sigma|n)$,
which computes all {\em distinguishing words} $u$ of minimum length,
i.e., words which are factors of only one of the words $s$ and $t$ from the problem's statement.
Several further improvements on the aforementioned results were reported in \cite{DBLP:journals/jda/CrochemoreMT03,DBLP:conf/wia/Tronicek02}.
Finally, in an extended abstract from 2003 \cite{SimonWords},
Simon presented another algorithm based on finite automata solving \maxkproblem\
which runs in $O(|\Sigma|n)$,
and he conjectures that it can be implemented in $O(|\Sigma|+n)$.
Unfortunately, the last claim is only vaguely and insufficiently substantiated,
and obtaining an algorithm with the claimed complexity seems to be non-trivial.
Although Simon announced that a detailed description of this algorithm will follow shortly,
we were not able to find it in the literature.\looseness=-1

In \cite{KufMFCS}, a new approach to efficiently solving \kdecision\ was introduced.
This idea was to compute,
for the two given words $s$ and $t$ and the given number $k$,
their shortlex forms:
the words which have the same set of subsequences of length at most $k$ as $s$ and $t$, respectively,
and are also lexicographically smallest among all words with the respective property.
Clearly, $s\sim_k t$ if and only if the shortlex forms of $s$ and $t$ for $k$ coincide.
The shortlex form of a word $s$ of length $n$ over $\Sigma$ was computed in $O(|\Sigma|n)$ time,
so \kdecision\ was also solved in $O(|\Sigma| n)$.
A more efficient implementation of the ideas introduced in~\cite{KufMFCS} was presented in \cite{dlt2020}:
the shortlex form of a word of length $n$ over $\Sigma$ can be computed in linear time $O(n)$,
so \kdecision\ can be solved in optimal linear time.
By binary searching for the smallest $k$, this gives an $O(n\log n)$ time solution for \maxkproblem.
This brings up the challenge of designing an optimal linear-time algorithm for non-binary alphabets.\looseness=-1

\vspace{-7pt}
\subparagraph*{Our results.}
In this paper we confirm Simon's claim from 2003 \cite{SimonWords}. We present a complete algorithm solving \maxkproblem\ in linear time
on Word RAM with words of size $\Theta(\log n)$.
This closes the problem of finding an optimal algorithm for \maxkproblem.
Our approach is not based on finite automata (as the one suggested by Simon),
nor on the ideas from \cite{KufMFCS,dlt2020}.
Instead, it works as follows.~Firstly,~looking~at~a~single word,
we partition the respective word into $k$-blocks:
contiguous intervals of positions inside the word,
such that all suffixes of the word inside the same block have exactly the same subsequences of length at most $k$
(i.e., they are $\sim_k$-equivalent).
Since the partition in~$(k+1)$-blocks refines the partition in $k$-blocks,
one can introduce the {\em Simon-Tree} associated~to~a~word:
its nodes are the $k$-blocks
(for $k$ from $1$ to at most $n$),
and each node on level $k$ has as children exactly the $(k+1)$-blocks in which it is partitioned.
We first show how to compute efficiently the Simon-Tree of a word.
Then, to solve \maxkproblem,
we show that one can maintain in linear time a connection between the nodes on the same levels of the Simon-Trees associated to the two input words.
More precisely, for all $\ell$,
we connect two nodes on level $\ell$ of the two trees if the suffixes starting in those blocks,
in their respective words,
have exactly the same subsequences of length at most $\ell$.
It follows that the value $k$ required in \maxkproblem\ is the lowest level of the trees
on which the blocks containing the first position of the respective input words are connected.
Using the Simon-Trees of the two words and the connection between their nodes,
we can also compute in linear time a distinguishing word of minimal length for $s$ and $t$.
Achieving the desired complexities is based on a series of combinatorial properties of the Simon-Trees,
as well as on a rather involved data structures toolbox.\looseness=-1

Our paper is structured as follows:
we firstly introduce the basic combinatorial and data structures notions in \cref{sec:preliminaries},
then we show how Simon-Trees are constructed efficiently in \cref{sec:simon-tree},
and,~finally, we show how \maxkproblem\ is solved by connecting the Simon-Trees of the two input words in \cref{sec:connecting-two-simon-trees,sec:s-connection-algo}. 
We end this paper with \cref{sec:conclusions} containing a series of concluding remarks, extensions, and further research questions.
A discussion on how our results can be extended to an optimal algorithm for \maxkproblem\ in the case of input words over general alphabets is also given in \cref{sec:CompModel}. \looseness=-1

\section{Preliminaries}
\label{sec:preliminaries}
Let $\mathds{N}$ be the set of natural numbers, including $0$.
An alphabet $\Sigma$ is a nonempty finite set of symbols called {\em letters}.
A {\em word} is a finite sequence of letters from $\Sigma$,
thus an element of the free monoid $\Sigma^{\ast}$.
Let $\Sigma^+=\Sigma^{\ast}\backslash\{\varepsilon\}$,
where $\varepsilon$ is the empty word.
The {\em length} of a word $w\in\Sigma^{\ast}$ is denoted by $|w|$.
The $i^{th}$ letter of $w\in\Sigma^{\ast}$ is denoted by  $w[i]$, for $i\in [1:\len w]$.
For $m, n\in\mathds{N}$,
we let $[m:n]=\{m,m+1,\ldots,n\}$ and $\w[m:n] = \w[m]\w[m+1]\ldots \w[n]$.

A word $u\in\Sigma^{\ast}$ is a {\em factor} of $w\in\Sigma^{\ast}$
if $w=xuy$ for some $x,y\in\Sigma^{\ast}$.
If $x=\varepsilon$ (resp. $y=\varepsilon$),
$u$ is called a  {\em prefix} (resp. {\em suffix} of $w$). 
For some $x\in \Sigma$ and $w\in \Sigma^*$,
let $|w|_{x}=|\{i\in[1:|w|] \mid w[i]=x\}|$ and $\alp(w)$ $= \{x \in \Sigma \mid |w|_x > 0 \}$ for $w\in\Sigma^{\ast}$;
in other words, $\alp(\w)$ denotes the smallest subset $S\subset \Sigma$ such that $\w\in S^*$.

\begin{definition}
	We call $u$ a subsequence of length $k$ of $\w$,
	where $|\w|=n$,
	if there exist positions $1\leq i_1<i_2<\ldots< i_k \leq n$,
	such that $u = \w[i_1]w[i_2]\cdots \w[i_k]$.
	Let $\SF_{\leq k}(i,\w)$ denote the set of subsequences of length at most $k$ of $\w[i:n]$.
	Accordingly, the set of subsequences of length at most $k$ of the entire word $\w$
	will be denoted by $\SF_{\leq k}(1,\w)$.
\end{definition}

Equivalently, $u= u_1 \ldots u_\ell $ is a subsequence of $w$
if there exist 
$v_1, \ldots,  v_{\ell+1}\in\Sigma^{\ast}$
such that $w = v_1 u_1 \ldots v_\ell u_\ell v_{\ell+1}$.
For $k\in\mathds{N}$, $\SF_{\leq k}(1,\w)$ is called the full $k$-spectrum of~$\w$.

\begin{definition}[Simon's Congruence]\label{SC}
	(i) Let $\w,\ww \in \Sigma^*$.
	We say that $\w$ and $\ww$ are equivalent under Simon's congruence $\sim_k$
	(or, alternatively, that $\w$ and $\ww$ are $k$-equivalent)
	if the set of subsequences of length at most $k$ of $\w$
	equals the set of subsequences of length at most $k$ of $\ww$,
	i.e., $\SF_{\leq k}(1,\w) = \SF_{\leq k}(1,\ww)$.
	
	(ii) Let $i,j \in [1:|w|]$.
	We define $i\sim_k j$ (w.r.t. $\w$)
	if $\w[i:n] \sim_k \w[j:n]$,
	and we say that the positions $i$ and $j$ are $k$-equivalent.
	
	(iii) A word $u$ of length $k$ distinguishes $\w$ and $\ww$ w.r.t. $\sim_k$
	if $u$ occurs in exactly one of the sets $\SF_{\leq k}(1,\w)$ and $\SF_{\leq k}(1,\ww)$.
\end{definition}

Following the discussion from the introduction,
for our algorithmic results we assume the Word RAM model with words of size $\Theta(\log n)$.

At this point, we also want to recall two data structures which play an important role in our results. These are the interval split-find and interval union-find data structures.
Their formal definition is given in the following. 

\begin{definition}[Interval split-find and interval union-find data structures]\label{SplitUnionFind}
	Let $V=[1:n]$ and $S$ a set with $S \subseteq V$.
	The elements of $S =\{ s_1, \ldots , s_p \}$ are called borders
	and are ordered $0 = s_0 < s_1 < \ldots < s_p < s_{p+1} = n+1$
	where $s_0$ and $s_{p+1}$ are generic borders.
	For each border $s_i$,
	we define $V(s_i) =[s_{i-1} +1: s_i]$ as an induced interval.
	Now, $P(S) \coloneqq \lbrace V(s_i)~|~ s_i \in S \rbrace$ gives an ordered partition of the interval $V$.
	\begin{itemize}
		\item The interval split-find structure maintains the partition $P(S)$ under the operations:
		\begin{itemize}
			\item For $u \in V$,
			$\find(u)$ returns $s_i \in S \cup \lbrace n+1 \rbrace$
			such that $u \in V(s_i)$.
			In other words, all elements in the interval $V(s_i)$ have the representative $s_i$.
			\item For $u \in V \setminus S$,
			$\splitchar(u)$ updates the partition $P(S)$ to $P(S \cup \{ u\} )$.
			That is, we find the interval $V(s_i)$ containing $u$,
			split it into the interval containing elements $\leq u$ and the interval of elements $>u$,
			and update the partition so that further $\find$ and $\splitchar$ operations can be performed.
		\end{itemize}
		\item The interval union-find structure maintains the partition $P(S)$ under the operations:
		\begin{itemize}
			\item For $u \in V$,
			$\find(u)$ returns $s_i \in S \cup \lbrace n+1 \rbrace$
			such that $u \in V(s_i)$.
			\item For $u \in S$,
			$\union(u)$ updates the partition $P(S)$ to $P(S \setminus \{ u\} )$.
			That is, if $u=s_1$,
			then we replace the intervals $V(s_i)$ and $V(s_{i+1})$ by the single interval $[s_{i-1}+1:s_{i+1}]$
			and update the partition
			so that further $\find$ and $\union$ operations can be performed.
		\end{itemize}
	\end{itemize}
\end{definition}

Rather informally, in the union-find (respectively, split-find) structure we maintain a 
partition of an interval (also called universe) $V=[1:n]$ in sub-intervals, under two operations: 
$\union$ of adjacent intervals (respectively, $\splitchar$ an interval in two sub-intervals around an element of the interval), and $\find$ the representative of the interval containing a given value. 
In our algorithms, when using these structures, we usually describe the intervals stored initially in the structure,
and then the $\union$s (respectively, the $\splitchar$s) which are made, as well as the $\find$ operations, 
without going into the formalism behind these operations. In usual implementations of these structures the representative of each interval, which is returned by $\find$, is its maximum;
we can easily enhance the data structures so that the $\find$ operation returns both borders of the interval containing the searched value. 
The following lemma was shown in~\cite{GabowT85,imaiUF}.

\begin{lemma}\label{union-find}
	One can implement the interval split-find (respectively, union-find) data structures,
	such that, the initialisation of the structures followed by a sequence of $m\in O(n)$ $\splitchar$ (respectively, $\union$) and $\find$ operations can be executed in $\mathcal{O}(n)$ time and space.
\end{lemma}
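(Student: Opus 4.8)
The plan is to prove both halves by the same hierarchical construction, which is the technique of~\cite{GabowT85,imaiUF}; I describe it in a way that treats the two structures uniformly. The starting point is that in both structures the partition $P(S)$ evolves monotonically: it only coarsens under $\union$s and only refines under $\splitchar$s. Hence it suffices to keep, for each position, lazy access to the two borders of its current interval, and to update this information only locally as operations are performed.

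\textbf{Micro level.} Cut the universe $[1:n]$ into $\Theta(n/\log n)$ consecutive \emph{blocks} of $\Theta(\log n)$ positions each. For a fixed block $B$, the set of borders of $S$ lying inside $B$ is a subset of the at most $\log n$ positions of $B$, and so it fits into a single $\Theta(\log n)$-bit machine word; augmented by one flag recording whether the first position of $B$ is the left end of its interval or whether that interval starts in an earlier block, this word encodes the restriction of $P(S)$ to $B$. On the Word RAM, locating the nearest set bit on either side of a given bit position, and setting or clearing a single bit, cost $O(1)$ time (by standard word manipulations, using if necessary a precomputed table of size $o(n)$; this table, and the initial block words obtained from $S$, are produced during a one-time $O(n)$ initialisation). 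Consequently, as long as the answer to a $\find$ lies in the same block as the query, and as long as a $\splitchar$ or $\union$ affects only the block containing it, every operation runs in $O(1)$ time.

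\textbf{Macro level.} On top of the blocks we maintain a secondary structure over the $N' = \Theta(n/\log n)$ blocks that tracks, in the split-find case, the maximal runs of consecutive blocks that are internally border-free, and, in the union-find case, the maximal runs of consecutive blocks that have been fully merged; this secondary structure is again an interval split-find, respectively union-find, instance, now over $N'$ elements. Operations are then routed between the two levels: a top-level $\find(u)$ first consults $u$'s block and, if that block carries a border on the required side of $u$, returns it in $O(1)$; otherwise one $\find$ in the macro structure locates the nearest block that does, and one more word operation in that block yields the border. A top-level $\splitchar(u)$ sets the appropriate bit in $u$'s block, and only if this is the first internal border of that block does it perform the matching macro-level update; a top-level $\union(u)$ clears the bit of border $u$ (fixing the left-end flag of the following block if $u$ was the rightmost border of its block), and only if its block thereby becomes border-free does it trigger the matching macro-level $\union$. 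Each top-level call therefore does $O(1)$ word operations plus a constant number of macro-level operations; summing over the $m = O(n)$ calls and adding the $o(n)$ initialisation gives the asserted $O(n)$ total time and space. Returning both borders of the interval found by a $\find$, rather than only a representative, is immediate, since both are produced by the word manipulations above.

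\textbf{The main obstacle.} Two points carry the weight of the argument. The first is the boundary bookkeeping: a single interval of $P(S)$ may begin in the middle of one block, run unbroken through several whole blocks, and end in the middle of a later block, so each operation must keep the per-block words, the left-end flags, and the macro-level partition mutually consistent while touching only $O(1)$ blocks; checking this invariant separately for $\find$, $\splitchar$, and $\union$ is routine but must be done carefully. The second, and more substantial, point is that the macro-level structure may be hit by $\Omega(n)$ $\find$ queries even though it undergoes only $O(n/\log n)$ updates, so one cannot afford an off-the-shelf union-find there; showing that an interval split-find, respectively union-find, structure on $N'$ elements supports this mixture of operations within total linear time is exactly the non-trivial content of~\cite{GabowT85,imaiUF}, and I would invoke it rather than reprove it from scratch.
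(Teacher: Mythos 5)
The paper itself does not prove this lemma: it is credited to \cite{GabowT85,imaiUF} in one sentence and never re-derived, so your ultimate appeal to those works for the crucial step is exactly in line with what the paper does. The micro/macro decomposition you sketch --- $\Theta(\log n)$-position blocks packed into machine words, a secondary structure over the $\Theta(n/\log n)$ block representatives, constant-time bit operations inside a block, and the boundary bookkeeping to keep per-block words, left-end flags, and the macro partition consistent --- is a reasonable rendering of the known construction, and your remark that both endpoints of a found interval come for free is correct.

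The one substantive misstep is in how you handle the macro level. You assert that the secondary structure is ``again an interval split-find, respectively union-find, instance,'' that ``one cannot afford an off-the-shelf union-find there,'' and then you invoke \cite{GabowT85,imaiUF} for precisely the statement under proof, just on a universe of size $N'=\Theta(n/\log n)$. Stated this way the argument is circular and has no base case, so it would not stand as a self-contained proof. The entire reason for shrinking the universe by a $\log n$ factor is the opposite of what you claim: at the macro level an off-the-shelf union-find with path compression and linking by rank \emph{does} suffice, because it processes $O(n)$ operations over $N'$ elements in time $O\bigl(n\cdot\alpha(n,N')\bigr)$, and with $n/N'=\Theta(\log n)$ the two-argument inverse Ackermann term $\alpha(n,N')$ collapses to $O(1)$. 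Recognising that the macro structure need not satisfy the same interval-linear bound is precisely what makes the recursion bottom out; a non-recursive device plays the analogous role for the split-find half. Since you do fall back on the citation anyway, your conclusion matches the paper's level of justification, but the intermediate claim should be dropped or corrected.
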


\section{Constructing the Simon-Tree of a word}
\label{sec:simon-tree}
In this section, we introduce a new data structure, 
which is fundamental to our approach -- the \emph{Simon-Tree}.
The Simon-Tree is used as a representation for the equivalence classes in a word,
which are explained in \cref{sec:equivalence-classes}.
The definition of Simon-Trees is then given in \cref{sec:simon-tree-definition},
and the construction is described in \cref{sec:simon-tree-construction}.

\subsection{Equivalence classes of a Word}
\label{sec:equivalence-classes}
In this section,
we develop a method to efficiently partition the positions of a given word $w$, of length $n$, 
into equivalence classes w.r.t. $\sim_k$,
such that all suffixes starting with positions of the same class have the same set of subsequences of length at most $k$.
As in this section we only deal with one input word $w$ of length $n$,
we will sometimes omit the reference to this word in our notation:
e.g. $\SF_k(i) = \SF_k(i,\w)$;
in the case of such omissions,
the reader may safely assume that we are referring to the aforementioned input word.

Firstly, we will examine the equivalence classes that each congruence relation $\sim_k$ induces on the set of suffixes of $\w$ for all $k$.
Let $1 \leq i<j\leq n$, then $\w[j:n]$ is a suffix of $\w[i:n]$,
hence $\SF_k(i) \supseteq \SF_k(j)$ holds for all $k\in\mathds{N}$.
For any $l\in [i:j]$ we obtain $\SF_k(i) \supseteq \SF_k(l) \supseteq \SF_k(j)$.
If we additionally let $i\sim_k j$,
then the sets of subsequences corresponding to $i$ and $j$ respectively are equal,
so $\SF_k(i)=\SF_k(l)=\SF_k(j)$ and $i\sim_k l\sim_k j$.
Hence, the equivalence classes of the set of suffixes of $w$ w.r.t. $\sim_k$
correspond to sets of consecutive indices (i.e., intervals) in $[1:\len w]$,
namely the starting positions of the suffixes in each class.
We call these classes $k$-\textit{blocks}.\looseness=-1

A $k$-block consisting only of a single position
(i.e., it is a {\em singleton-$k$-block}),
remains an $\ell$-block for all $\ell>k$.
For a $k$-block $b=[m_b:n_b]$,
$m_b$ is its starting position and $n_b$ its ending position.
For the ending position of a $k-$block we also use the following definition.\looseness=-1

\begin{definition}
	For some $k>0$, if $i\sim_{k-1} i+1$ and $i\nsim_{k} i+1$,
	then we will say that $i$ \textit{splits} its $(k-1)$-block or that $i$ is a $k$-\textit{splitting position}.
\end{definition}

If $a=[m_a:n_a]$ is a $k$-block and $b=[m_b:n_b]$ is a $(k+1)$-block with $m_a\leq m_b\leq n_b\leq n_a$,
then we say that $b$ is a $(k+1)-$block in $a$ (alternatively, of $a$).

Since, for $1 \leq i,j \leq n$, $i\sim_k j$ holds if $i\sim_{k+1}j$,
the relation $\sim_{k+1}$ is a refinement of $\sim_k$.
In our setting, this means that the $(k+1)-$blocks of $w$ are obtained by partitioning the $k$-blocks of $w$ into subintervals.
To obtain a partition of the positions of $w$ into equivalence classes and the corresponding blocks,
we can use this refinement property.
We get the following inductive procedure.\looseness=-1

\begin{description}
	\item[$0$-blocks] For any $i$, with $1 \leq i \leq n$, we have $\SF_0(i) = \{\varepsilon\}$.
	Thus, we refer to $[1:n]$ as the $0$-\textit{block} of $\w$.
	Note, however, that position $n$ cannot be referred to as a $0$-splitting position.\looseness=-1
	\item[$(k+1)$-blocks] For $k \geq 0$ and a $k$-block $a = [m_a : n_a]$ in $\w$ with $\len a \geq 2$,
	we can find the $(k+1)$-splitting positions inside of $a$.
	Except for the case when $a$ is a $0$-block,
	position $n_a$ marks a $k$-splitting position.
	So, if $k > 0$, we slice off position $n_a$
	and obtain a truncated block $\Adag$; 
	if $k = 0$, then $\Adag = a$.
	Going from right to left through $\Adag$,
	the position of every character we encounter for the first time
	(so, which we have not seen before in this traversal of $\Adag $) is a splitting position of $\Adag$.
	Consequently, those splitting positions and $n_a$ (only if $k>0$) will split $a$ into $(k+1)$-blocks.
	The correctness of this approach follows from \cref{lem:split}.
\end{description}

\begin{restatable}{lemma}{restateSplit}\label{lem:split}
	Let $a=[m_a,n_a]$ be a $k$-block with $1 \leq m_a < n_a\leq n$. Let  $\Adag = a$ for $k=0$ and $\Adag = [m_a : n_a - 1]$ for $k > 0$. 
	Then the following holds for all $i,j \in \Adag$: 
	\begin{enumerate}[(i)]
		\item if $k>0$ then $i \nsim_{k+1} n_a$;
		\item $i\sim_{k+1} j$ if and only if $\SFdagi = \SFdagj$.
	\end{enumerate}
\end{restatable}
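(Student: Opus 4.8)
The plan is to work directly with the characterization of subsequences in terms of first-occurrence decompositions. First I would recall the standard fact underlying this whole circle of ideas: for a word $v$ and a letter $x$, if $v = v' x v''$ where $x$ does not occur in $v'$ (so $x$ is the first occurrence of $x$, read left to right — equivalently, reading right to left in the reversed word), then $u \in \SF_{\leq k+1}(1,v)$ if and only if $u \in \SF_{\leq k+1}(1, v' v'')$ or $u = x u'$ with $u' \in \SF_{\leq k}(1, x v'')$; more to the point, the set of length-$\le(k{+}1)$ subsequences of a suffix is controlled by, for each letter $x$, the length-$\le k$ subsequences of the suffix starting immediately after the \emph{next} occurrence of $x$. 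This gives the bridge between level $k{+}1$ and level $k$ that the inductive construction in the paragraph before the lemma is exploiting.

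For part~(i), assume $k > 0$ and take $i \in \Adag = [m_a : n_a - 1]$. Since $a$ is a $k$-block, $i \sim_k n_a$, so $\SF_{\leq k}(i) = \SF_{\leq k}(n_a)$. But $n_a$ is the ending position of the $k$-block $a$, hence $n_a$ is a $k$-splitting position: $n_a \nsim_k n_a + 1$, which (combined with the definition of splitting and the fact that $a$ has length $\ge 2$) forces $\SF_{\leq k-1}(n_a) \subsetneq \SF_{\leq k-1}(i)$ to fail in the right direction — more precisely, the reason $n_a$ splits off is exactly that $w[n_a]$ is a letter whose "contribution" at level $k$ is not yet reflected in the shorter suffix. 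I would make this precise by exhibiting a subsequence of length $\le k+1$ present in $\SF_{\leq k+1}(i)$ but absent from $\SF_{\leq k+1}(n_a)$ (or vice versa): since $i < n_a$ there is at least the position $n_a$ available to $w[i:n]$ that is "used up" in $w[n_a:n]$, and one builds the distinguishing word by prepending $w[n_a]$ (or the first new letter encountered scanning left from $n_a$) to a word that distinguishes at level $k$. The cleanest route is probably induction on $k$: the base case $k=1$ is direct from counting letters, and the inductive step uses the bridge fact above together with the hypothesis that $n_a$ is the right end of a genuine (length $\ge 2$) $k$-block.

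For part~(ii), the forward direction "$i \sim_{k+1} j \Rightarrow \SF_{\leq k+1}(i,\Adag) = \SF_{\leq k+1}(j,\Adag)$" follows because, by part~(i), no subsequence of length $\le k+1$ of $w[i:n]$ (for $i \in \Adag$) actually uses position $n_a$ — any occurrence could be shifted off it, since $w[n_a]$ already appears earlier in $\Adag$ when $k > 0$ (and when $k=0$, $\Adag = a$ and there is nothing to remove). Hence $\SF_{\leq k+1}(i) = \SF_{\leq k+1}(i,\Adag)$ and the equivalence transfers. The reverse direction "$\SF_{\leq k+1}(i,\Adag) = \SF_{\leq k+1}(j,\Adag) \Rightarrow i \sim_{k+1} j$" is the same identity run backwards: equality of the truncated spectra plus the fact that extending by the single position $n_a$ adds nothing new (again by part~(i), applied to both $i$ and $j$) yields $\SF_{\leq k+1}(i) = \SF_{\leq k+1}(j)$.

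The main obstacle I expect is part~(i): one has to argue carefully that the last position of a length-$\ge 2$ $k$-block really is "redundant" at level $k+1$ for the purpose of computing the spectra of the other positions, while simultaneously being \emph{non}-redundant in the sense that $n_a$ itself is not $(k{+}1)$-equivalent to the positions before it. These two facts pull in opposite directions and the resolution is precisely that $w[n_a]$ reappears inside $\Adag$ (so position $n_a$ contributes no new letter), yet the suffix \emph{after} $n_a$ is strictly shorter in spectrum at level $k$ than the suffix after the corresponding occurrence inside $\Adag$ — this is where the hypothesis "$a$ is a $k$-block of length $\ge 2$", i.e. that $n_a$ is genuinely a $k$-splitting position, gets used, and getting the quantifiers and the induction set up cleanly is the delicate point. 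Everything else is bookkeeping with the first-occurrence decomposition.
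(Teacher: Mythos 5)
There is a genuine gap, and it starts with a misreading of the statement. The symbol $\SFdagi$ expands to $\SF_1(i,a^{\dag})$, i.e.\ the set of subsequences of length \emph{at most one} of $w[i:n_a-1]$ --- in effect $\{\varepsilon\}\cup\alp(w[i:n_a-1])$. It is not the set of subsequences of length at most $k+1$. Part (ii) therefore asserts something much stronger and more useful than your reading: inside a $k$-block, membership in the same $(k+1)$-block is governed \emph{only} by which letters appear between a position and the penultimate position of the block. This is precisely what licenses the ``scan right-to-left and split at each first occurrence of a letter'' rule described just before the lemma, and it is the fact the Simon-tree construction depends on. Your reinterpretation --- equality of the $(k+1)$-spectra of the truncated word $w[i:n_a-1]$ --- is an incomparable statement; in particular its backward direction has a strictly stronger hypothesis than the lemma's, so even a correct proof of your version would not establish the lemma.

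The argument you give for (ii) is also broken on its own terms. You assert that, by part (i), ``no subsequence of length $\le k+1$ of $w[i:n]$ actually uses position $n_a$,'' and hence $\SF_{\leq k+1}(i)=\SF_{\leq k+1}(i,a^{\dag})$. This is false: subsequences of $w[i:n]$ freely use positions up to $n$, and $a^{\dag}$ is generally only a tiny prefix of the suffix $w[i:n]$. Part (i) only says that position $n_a$ is not $(k+1)$-equivalent to the positions to its left; it says nothing about occurrences being ``shiftable off'' $n_a$ or off anything to its right. The paper's proof instead uses that all positions of the $k$-block $a$ share the same set $\SF_k$: if $i<j$ and $c\in\SFdagi\setminus\SFdagj$, one takes $v'\in\SF_k(j)\setminus\SF_k(n_a+1)$ (nonempty since $j\nsim_k n_a+1$) and notes $cv'\in\SF_{k+1}(i)$; but any occurrence of $c$ at or after $j$ is at some $\ell\ge n_a$, forcing $v'\in\SF_k(n_a+1)$, a contradiction. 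The converse is similar, relocating the first letter $v[1]$ of a hypothetical distinguishing word to an occurrence $\ell\in[j:n_a-1]$, which exists \emph{because} $\SFdagi=\SFdagj$, and using $\ell+1\sim_k i$.

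Your sketch of (i) is also not a proof, and the construction hinted at is wrong. The paper's argument is a clean reduction to the block structure one level down: assume $i\sim_{k+1}n_a$; then for any $v'\in\SF_k(i+1)$ the word $w[i]v'$ lies in $\SF_{k+1}(i)=\SF_{k+1}(n_a)$, and tracking where its first letter can land in $w[n_a:n]$ shows $v'\in\SF_k(n_a+1)$, hence $\SF_k(i+1)\subseteq\SF_k(n_a+1)$, hence $i+1\sim_k n_a+1$, contradicting that $a$ is a $k$-block. The letter to prepend is $w[i]$, not $w[n_a]$: prepending $w[n_a]$ to a word $v'\in\SF_k(i+1)\setminus\SF_k(n_a+1)$ need not even produce a subsequence of $w[i:n]$, since $v'$ may not survive past wherever $w[n_a]$ first occurs. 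The phrase ``the first new letter encountered scanning left from $n_a$'' does not resolve this.
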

\begin{proof}
	\begin{enumerate}[(i)]
		\item Firstly, let $i\in\Adag$ and $k>0$, and assume $i\sim_{k+1} n_a$.
		We choose an arbitrary $v'\in \SF_{k}(i+1)$
		and let $v = \w[i]v'$.
		Now, $v\in \SF_{k+1}(i)$ hence $v\in\SF_{k+1}(n_a)$.
		But then $v'\in\SF_k(n_a+1)$ and, as $v'$ was arbitrary chosen, then $i+1 \sim_k n_a+1$,
		which contradicts the definition of $a$ being a $k$-block. Note that for $k=0$ we would have that $n_a+1>n$ so our reasoning does not hold.
		\item For $k=0$ the statement follows directly from \cref{SC}.
		So let $k > 0$ and w.l.o.g. $i<j$.
		For the first implication let $i\sim_{k+1} j$ and assume $\SFdagi \neq \SFdagj$.
		Then $\SFdagj$ is properly contained in $\SFdagi$,
		hence we can choose $c\in\SFdagi \setminus \SFdagj$ and find a left-most position $\ell>j$ such that $w[l] = c$ (note that $\ell\geq n_a$).
		For an arbitrary $v'\in \SF_k(j)\setminus\SF_k(n_a+1)$ we have $cv'\in \SF_{k+1}(i)=\SF_{k+1}(j)$.
		But then $cv'\in\SF_{k+1}(\ell)$ and therefore $v'\in\SF_k(\ell+1)\subset\SF_k(n_a+1)$, a contradiction.
		
		Lastly let $\SFdagi = \SFdagj$ and assume $i\nsim_{k+1} j$.
		We can find $v\in\SF_{k+1}(i)\setminus\SF_{k+1}(j)$.
		Necessarily, we have $v[1]\in \alp(w[i:j-1])$ hence $v[1]\in\SFdagi=\SFdagj$.
		Therefore we can find a (unique left-most) position $\ell\in [j:n_a-1]$ such that $w[\ell] = v[1]$.
		Now $v[1]^{-1}v\in\SF_k(\ell+1)$, because $\ell+1\sim_k i$.
		This implies $v\in\SF_{k+1}(\ell)\subset\SF_{k+1}(j)$, a contradiction.
	\end{enumerate}
\end{proof}

\subsection{Simon-Tree definition}
\label{sec:simon-tree-definition}
Before introducing the Simon-Tree,
we recall some basic notions.
An ordered rooted tree is a rooted tree
which has a specified order for the subtrees of a node.
We say that the depth of a node is the length of the unique simple path from the root to that node.
Generally, the nodes with smaller depth are said to be {\em higher}
(the root is the highest node with depth $0$),
while the nodes with greater depth are {\em lower} in the tree.

We can now define a new data structure called \emph{Simon-Tree}.
The Simon-Tree of a word $w$ gives us a hierarchical representation of the equivalence classes inside of $w$.
While an example of a Simon-Tree can be seen in \cref{fig:simon-tree},
the formal definition of a Simon-Tree is as follows.

\begin{figure}
	\centering
	\includegraphics[width=0.7\linewidth]{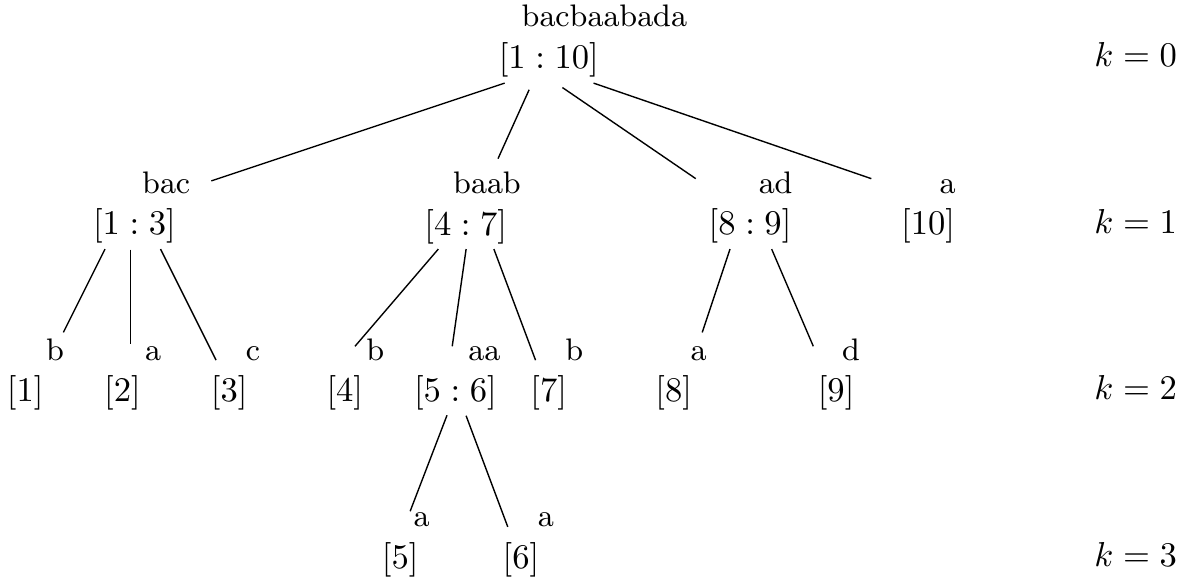}
	\caption{Simon-Tree of the word $bacbaabada$. Above each block $[i:j]$ we wrote the word $w[i:j]$.}
	\label{fig:simon-tree}
\end{figure}

\begin{definition}
	The {\em Simon-Tree} $T_w$ associated to the word $w$, with $|w|=n$,
	is an ordered rooted tree.
	The nodes of depth $k$ represent $k-$blocks of $w$, for $0\leq k\leq n$,
	and are defined recursively.\looseness=-1
	\begin{itemize}
		\item The root corresponds to the $0$-block of the word $w$,
		i.e., the interval $[1:n]$.
		\item For $k>1$ and for a node $a$ of depth $k-1$,
		which represents a $(k-1)$-block $[i:j]$ with $i<j$,
		the children of $a$ are exactly the blocks of the partition of $[i:j]$ in $k$-blocks,
		ordered decreasingly (right-to-left) by their starting position.
		\item For $k>1$,
		each node of depth $k-1$ which represents a singleton-$(k-1)$-block is a leaf.
	\end{itemize}
\end{definition}
	
The nodes of depth $k$ in a tree $T_w$ are called
{\em explicit $k$-nodes (or simply $k$-nodes)};
by abuse of notation,
we identify each $k$-node by the $k$-block it represents.

With respect to their starting positions in the word,
we number the children nodes (which are blocks) of a node $b$ from right to left.
That is, the $i^{th}$ child of $b$ is the $i^{th}$ block of the partition of $a$,
counted from right to left.
The singleton-$j$-blocks, for $j<k$, are also $k$-blocks,
but they do not appear explicitly as nodes of depth $k$ in the tree $T_w$.
We will say that they are {\em implicit $k$-nodes}.
In other words, an explicit singleton-$j$-node is an implicit $k$-node, for all $k>j$,
and the only child of a $k$-node $[i:i]$ is the implicit $(k+1)$-node $[i:i]$.
The nodes of depth $k$ in the Simon-Tree $T_w$ do not necessarily comprise all the $k$-blocks of $w$,
but they contain explicitly exactly those $k$-blocks of $w$
that were obtained by non-trivially splitting a $(k-1)$-block of $w$ which was not a singleton.

\subsection{Simon-Tree construction}
\label{sec:simon-tree-construction}

We are interested in constructing the Simon-Tree $T_\w$ associated to a word $\w$, with $|\w|=n$,
in linear time.
In this section we give a description of the construction algorithm and its analysis.
The corresponding pseudocode can be seen in \cref{alg:simontree,alg:findnode,alg:splitnode}.

For the algorithms, we use the array $X$ of size $n$ which holds,
for a given position $i$,
the next position of $w[i]$ in the word $w[i+1:n]$.
We formally define this with
$X[i] = \min \lbrace j \mid \w[j] = \w[i], j>i \rbrace$,
while we assume $X[i] = \infty$
if $\w[i] \notin \alp(w[i+1:n])$.
The array can be calculated in $\mathcal{O}(n)$ time and space.

\begin{restatable}{lemma}{restateArrayX}\label{arrayX}
	Given a word $\w$ with length $\len \w = n$.
	The array $X$ can be calculated in $\mathcal{O}(n)$ time and space for the entire word. 
\end{restatable}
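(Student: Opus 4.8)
The plan is to compute $X$ in two passes over $w$, using an auxiliary array indexed by the alphabet to track, for each letter, the most recently seen position when scanning from right to left. First I would allocate an array $\mathtt{last}$ of size $|\Sigma|$ (which is $O(n)$, since we assumed the alphabet is $\{1,\ldots,n+n'\}$, or $O(n)$ after the linear-time renaming), and initialise every entry to $\infty$. Then I would scan $i$ from $n$ down to $1$: at step $i$, I set $X[i] \coloneqq \mathtt{last}[w[i]]$ (which is the smallest index $j>i$ with $w[j]=w[i]$, or $\infty$ if no such $j$ has been seen yet), and then update $\mathtt{last}[w[i]] \coloneqq i$. A short invariant argument shows that before processing $i$, $\mathtt{last}[c]$ holds $\min\{j \mid w[j]=c, j>i\}$ for every letter $c$, so the assignment to $X[i]$ is exactly the value required by the definition $X[i]=\min\{j \mid w[j]=w[i], j>i\}$.

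Each of the $n$ iterations performs a constant number of array reads and writes, so the scan runs in $O(n)$ time. The only subtlety is the initialisation of $\mathtt{last}$: a naive initialisation takes $O(|\Sigma|)$ time, which is $O(n)$ in our model but could in principle be larger if one did not first rename the alphabet. To be safe and keep everything strictly $O(n)$ regardless of the nominal alphabet size, I would either rely on the linear-time sorting/renaming step already described in the preliminaries, or use the standard trick of lazy initialisation of an array (keeping a separate stack of the cells that have been touched, so that reads of untouched cells can be detected in constant time). Either way the total time and space are $O(n)$.

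The main obstacle here is essentially bookkeeping rather than mathematics: there is no hard combinatorial core, only the need to be careful about what "the next position of $w[i]$" means (strictly greater index, same letter, minimal such) and to ensure the alphabet-indexed scratch array does not secretly cost more than $O(n)$. Once the invariant on $\mathtt{last}$ is stated and checked, correctness is immediate, and the complexity bound follows by inspection of the single backward loop.
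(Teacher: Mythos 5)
Your proof is correct and follows essentially the same approach as the paper: a single right-to-left scan maintaining an alphabet-indexed array of most recent occurrences, with $|\Sigma|\leq n$ (after renaming) ensuring linear time and space. The extra remark about lazy initialisation is a harmless safety net but not needed given the paper's alphabet assumption.
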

\begin{proof}
	The word $\w$ needs to be traversed only once from right to left by maintaining an array of size $| \Sigma |$ with the last occurrence of each character.
	Since $| \Sigma | \leq n$, the results follow immediately.
\end{proof}

As an example consider now the word $\w = \mathtt{bacbaabada}$.
The array $X$ is then depicted as follows.

\begin{table}[h!]
	\centering
	\begin{tabular}{|c|c|c|c|c|c|c|c|c|c|c|c|c}
		\hline
		$i$ & 1 & 2 & 3 & 4 & 5 & 6 & 7 & 8 & 9 & 10\\
		\hline
		$\w[i]$ & $\mathtt{b}$ & $\mathtt{a}$ & $\mathtt{c}$ & $\mathtt{b}$ & $\mathtt{a}$ & $\mathtt{a}$ & $\mathtt{b}$ & $\mathtt{a}$ & $\mathtt{d}$ & $\mathtt{a}$\\
		\hline
		$X[i]$ & 4 & 5 & $\infty$ & 7 & 6 & 8 & $\infty$ & 10 & $\infty$ & $\infty$\\
		\hline
	\end{tabular}
\end{table}

When applying our algorithm to $w$,
we get the tree shown in \cref{fig:simon-tree},
where we represent each node with the block $[i:j]$ it represents accompanied by the word $w[i:j]$.

\SetKwFunction{KwSplitNode}{splitNode}
\SetKwFunction{KwFindNode}{findNode}
\begin{algorithm}[!htb]
	\SetAlgoLined
	\KwIn{Word $w$ with $|w| = n$}
	\KwResult{Simon-Tree~$T_w$}
	$w' \gets w\$ $\;
	Let $T$ be the tree with the root associated to the block $[?:n+1]$ of $w'$\;
	Let $p$ be a pointer to the root of $T$\;
	Compute the array $X[i]$\;
	\BlankLine
	
	\For{i=n \KwTo 1}{
		$a \gets$ \KwFindNode{$i$,$T$,$p$}\;
		$(T,p) \gets$ \KwSplitNode{$i$,$T$,$a$}\;
	}
	\BlankLine
	
	Set starting position for all blocks from leftmost branch including the root to $1$\;
	Remove $\$$-letter from tree: Remove the node associated to $[n+1:n+1]$ from $T$ and set all right ends $r$ of blocks on the rightmost branch to $n$\;
	\Return T \;
	
	\caption{Building the Simon-Tree $T_w$ for a word $w$}
	\label{alg:simontree}
\end{algorithm}
\SetKw{KwAnd}{and}
\begin{algorithm}[!htb]
	\SetAlgoLined
	\KwIn{Position $i$ in $w$, Simon-Tree $T_{w[i+1:n]\$}$, Pointer to leftmost leaf $a$ of $T_{w[i+1:n]\$}$}
	\KwResult{Pointer to node on leftmost branch of $T_{w[i+1:n]\$}$}
	\BlankLine
	
	\While{$a$ is not the root of $T_{w[i+1:n]\$}$}{
		$r \gets$ ending position of the block represented by $a$\;
		$r_p \gets$ ending position of block represented by $parent(a)$\;
		\BlankLine
		
		\uIf{$X[i] \geq r$ \KwAnd $X[i] < r_p$}{
			\Return $a$\;
		} \Else{
			Close the block represented by $a$: Set its starting position to $i+1$\;
			$a \gets parent(a)$\;
		}
	}
	\BlankLine
	
	\Return $a$\;
	
	\caption{$\findNode$}
	\label{alg:findnode}
\end{algorithm}
\begin{algorithm}[!htb]
	\SetAlgoLined
	\KwIn{Position $i$, Simon-Tree $T_{w[i+1:n]\$}$, Pointer to node $a$ on leftmost branch of $T_{w[i+1:n]\$}$}
	\KwResult{Simon-Tree $T_{w[i:n]\$}$, Pointer to leftmost leaf of $T_{w[i:n]\$}$}
	$T \gets T_{w[i+1:n]\$}$ \;
	\uIf{$a$ is the leftmost leaf of $T$}{
		\tcp{node $a$ represents the open block $[?:i+1]$}
		Add a child to $a$ associated to the now completed block $[i+1:i+1]$\;
		Add a leftmost child $b$ to $a$ associated to the open block $[?:i]$\;
	} \Else{
		Add a leftmost child $b$ to $a$ associated to the open block $[?:i]$\;
	}
	\BlankLine
	
	\Return ($T$, pointer to $b$)
	
	\caption{$\splitNode$}
	\label{alg:splitnode}
\end{algorithm}

\subparagraph*{Algorithm description.}
\label{sec:algorithm-description}
In general,
we consider the individual letters of the word $w$ from right to left.
After considering $\w[i]$,
the tree we constructed so far corresponds to the Simon-Tree of the suffix $w[i:n]$.
By traversing the word from right to left,
we also construct the Simon-Tree in a right-to-left manner.
Accordingly, it holds that at each time step only the nodes on the leftmost branch of the tree are possible to be enhanced.
This is because for the tree of the word $w[i+1:n]$,
prepending a new letter to the word $w[i+1:n]$ can only affect the leftmost node/block on each level of the tree,
as the nodes of level $k$ store the $k$-blocks,
and, accordingly, build a
(possibly intermittent, if we only consider the explicit nodes)
partition of the word $w$ into non-overlapping intervals, for all $k$,
while the nodes of one level are ordered with regard to their position in the word.

This means that a newly considered position of our word
can be only added to a node on the leftmost branch of the tree
that was constructed so far during the application of the algorithm.
Therefore, we call the nodes on the leftmost branch \emph{open blocks}.
These open blocks are not complete
and have a yet unknown starting position.
We use $[?:i]$ to denote the open block with unknown starting position
and ending position $i$.
For the nodes on the leftmost branch,
we only store the ending position (or splitting position)
of their represented block.
For all nodes that are not on the leftmost branch in the tree,
we store both starting and ending position of their represented block.

In the beginning of the construction algorithm,
we append the letter $\$$ at the end of $w$ to ensure that all the positions of $w$ are treated in a uniform way.
More precisely, the usage of the $\$ $-letter allows us to uniformly find the splitting points in a block according to case (ii) of \cref{lem:split} only. 
That is, by adding the letter $\$$ at the end,
we avoid position $n$ as being falsely recognized as a $0$-splitting position since it is the ending position of the $0$-block $[1:n]$ of $\w$.
As seen in \cref{alg:simontree},
we define $w' = w \$$
and start the algorithm with the tree that only has one node, the root,
representing the open block $[?:n+1]$.

When considering a new position $i$ of the word,
and, essentially, inserting it into the current tree,
we want to find the correct tree level
where position $i$ would mark the splitting of a new block
or a new node, respectively.
According to \cref{alg:findnode}, by starting at the leftmost leaf
(which is the node associated to the open block $[?:i+1]$)
and going up the leftmost branch of the current tree,
we look for the first node where the character $w[i]$ occurs on a non-ending position.
Let this be node $A$ of depth $k$,
representing an open $k$-block.
Node $A$ cannot be a leaf
since leafs only represent singleton-blocks, consist therefore only of one position,
and $w[i]$ could not occur on a non-ending position.
Let the leftmost child of $A$ be the node $a$ of depth $k+1$.
By utilizing \cref{lem:split},
we get the information
that position $i$ is a $(k+2)$-splitting position in $a$,
and consequently,
our new block with ending position $i$ is mapped to level $(k+2)$ of the Simon-Tree.
Following \cref{alg:splitnode}, we then insert the new block $[?:i]$ in the respective level of the Simon-Tree as a leftmost child of node $a$.

All nodes we traversed from leftmost leaf up to node $a$
represent $l$-blocks with $l \geq (k+2)$.
These blocks are closed during the process of finding the correct position as seen in \cref{alg:findnode}.
Since $i$ is a $(k+2)$-splitting position
we set the starting position for all open $l$-blocks, with $l \geq k+2$,
to $i+1$.\looseness=-1

It remains only to mention the special case,
where we do not find an occurrence of $w[i]$ on our traversal from leftmost leaf to the root.
In this case,
the letter did not appear yet in the word.
It therefore marks a $1$-splitting position
and as per \cref{alg:findnode},
we return the tree root,
to which the block $[?:i]$ is then added as a leftmost child as per \cref{alg:splitnode}.\looseness=-1

\subparagraph*{Algorithm analysis.} The pseudocode for our algorithm is shown in \cref{alg:simontree,alg:findnode,alg:splitnode}.
\cref{oneWord} states the main result of this section.
While the correctness of the algorithm follows mainly from the explanations above,
its linear running time requires an amortized analysis.
We observe that for each position $i$ in $w$ we 
traverse $t$ nodes (representing open blocks) while going up on the leftmost branch,
then insert one leaf on the leftmost branch while closing the $t$ traversed nodes 
and moving them all to the right of the inserted leaf (so out from the leftmost branch).
As the total number of nodes in the Simon-Tree $T_w$ is linear in $|w|$,
and each node is inserted once and traversed once,
the conclusion follows.
For the interested reader we point out
that our analysis resembles to a certain extent the one of the algorithm constructing the Carthesian-Tree for a set of numbers \cite{Vuillemin80}. 

\begin{restatable}{theorem}{restateOneWord}\label{oneWord}
	Given a word $w$, with $|w|=n$,
	we can construct its Simon-Tree in $O(n)$ time.\looseness=-1
\end{restatable}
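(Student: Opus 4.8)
The plan is to establish correctness and then the running time separately, treating the $\$$-padding as a preprocessing convenience that is stripped at the end. First I would argue correctness by induction on $i$ going from $n+1$ down to $1$: the invariant is that after processing position $i$, the tree $T$ stores exactly the Simon-Tree of $w'[i:n+1]$, with the nodes on the leftmost branch being the ``open'' blocks (those whose starting position is still unknown because prepending more letters may extend them further left). The base case is the single-node tree for $w'[n+1:n+1] = \$$. For the inductive step, I would invoke \cref{lem:split}: when we prepend $w[i]$, the only blocks that can change are the leftmost ones on each level, and by part (ii) of the lemma, position $i$ splits the $(k-1)$-block it lies in according to whether $w[i]$ has already been ``seen'' when scanning that block's truncation from the right. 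This is exactly what the condition $X[i] \geq r$ and $X[i] < r_p$ in \cref{alg:findnode} tests: $X[i] \geq r$ says $w[i]$ does not recur strictly inside the child block $a = [?\,{:}\,r]$ (so $i$ is not absorbed into $a$ at the next level down), while $X[i] < r_p$ says it does recur inside the parent's truncated block (so $i$ does not escape further up). Part (i) of \cref{lem:split} justifies slicing off the ending position $n_a$ before scanning, which is why the $\$$-letter at the end prevents position $n$ from being misread as a $0$-splitting position, and why we restore the right ends to $n$ and the left ends along the leftmost branch to $1$ at the very end.

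Next I would handle the two edge cases explicitly so the induction is airtight: the case where $w[i]$ was never seen before (the upward walk reaches the root without finding a match), in which case $i$ is a $1$-splitting position and $[?\,{:}\,i]$ becomes a new leftmost child of the root; and the case in \cref{alg:splitnode} where $a$ is itself the leftmost leaf (a singleton open block $[?\,{:}\,i+1]$), in which case we must first close it off as $[i+1:i+1]$ and give it the fresh child $[?\,{:}\,i]$, reflecting that a singleton $k$-block remains a $k$-block and only gets a proper child one level deeper. I would also note that every node traversed on the way up from the leftmost leaf to $a$ represents an $\ell$-block with $\ell \geq k+2$ that is now complete, so setting each such open block's starting position to $i+1$ is correct and these nodes permanently leave the leftmost branch.

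For the running time, the argument is an amortized (potential/charging) one. The array $X$ is computed in $O(n)$ by \cref{arrayX}. For each position $i$, the work in \cref{alg:findnode,alg:splitnode} is $O(1)$ plus $O(t_i)$, where $t_i$ is the number of nodes traversed upward on the leftmost branch. The key observation is that each of these $t_i$ traversed nodes is simultaneously removed from the leftmost branch (it gets its starting position fixed and now lies strictly to the right of the newly inserted leftmost child), and a node, once off the leftmost branch, never returns to it. Since exactly one new node is created per position $i$, the total number of nodes ever created is at most $n+1$, so $\sum_i t_i \leq n+1$, giving total time $O(n)$. The final cleanup pass — fixing left ends along the leftmost branch to $1$ and removing the $\$$-node while resetting right ends on the rightmost branch to $n$ — touches only one root-to-leaf path on each side, hence $O(n)$.

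The main obstacle, I expect, is not the running time (the amortization is the standard Cartesian-tree argument, as the authors note, comparing to \cite{Vuillemin80}) but verifying rigorously that the test in \cref{alg:findnode} picks out precisely the level $k+2$ predicted by \cref{lem:split}. One must carefully translate between ``$w[i]$ occurs on a non-ending position of an open block $[?\,{:}\,r]$'' — the informal description — and the numeric condition on $X[i]$, keeping straight the off-by-one issues around truncation (slicing off $n_a$) and around the convention that open blocks only store their right endpoint. Getting this translation exactly right, across the regular case and both edge cases, is where the proof must be most careful.
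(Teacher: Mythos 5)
Your proof is correct and follows essentially the same route as the paper's (which is much terser, essentially deferring to the algorithm-description paragraph): correctness rests on \cref{lem:split} together with the invariant that the current tree is the Simon-Tree of the already-processed suffix, and the $O(n)$ bound comes from the amortized observation that each open-block node is created once and traversed/closed once on the leftmost branch before permanently leaving it. One small wording slip: \cref{alg:splitnode} can create \emph{two} nodes at a position $i$ (the completed singleton $[i+1:i+1]$ and the fresh open block $[?:i]$), so ``the total number of nodes ever created is at most $n+1$'' is not literally true, but your conclusion $\sum_i t_i \le n+1$ is still right for the right reason -- only open-block nodes are ever traversed by $\findNode$, and exactly $n+1$ such nodes are created.
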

\begin{proof}	
	The correctness of the construction algorithm follows from its definition which is given in the Algorithm Description paragraph in \cref{sec:algorithm-description}.
	The fundamental conclusions along the way are mainly based on \cref{lem:split}.
	
	To analyse the runtime of the algorithm,
	we observe that for each position $i$ in $w$ we do the following.
	Traverse $t$ nodes (representing open blocks)
	while going up on the leftmost branch,
	then insert one leaf on the leftmost branch
	while closing the $t$ traversed nodes from the leftmost branch
	(they will now be all right from the inserted leaf).
	
	Summing up over all positions in $w$,
	we get that every node of the final Simon-Tree is created once as an open block,
	which can be done in constant time,
	thereafter it is traversed (and closed) once.
	
	This gives us a total linear runtime.
\end{proof}



\section{Connecting two Simon-Trees}
\label{sec:connecting-two-simon-trees}
In this section, we propose  a linear-time algorithm for the \maxkproblem\ problem.
The general idea of this algorithm is to analyse simultaneously the Simon-Trees of the two input words $s$ and $t$ of length $n$ and $n'$, respectively,
and establish a connection between their nodes.

\subsection{The S-Connection}
\label{sec:s-connection}
In our solution of \maxkproblem,
we construct a relation called S-Connection
(abbreviation for Simon-Connection)
between the nodes of the Simon-Trees $T_s$ and~$T_t$
constructed from the two input words $s$ and $t$. 

\begin{definition}
	The (explicit or implicit) $k$-node $a$ of $T_s$
	and the (explicit or implicit) $k$-node $b$ of $T_t$ are S-connected
	(i.e., the pair $(a,b)$ is in the S-Connection)
	if and only if $s[i:n]\sim_k t[j:n']$
	for all positions $i$ in block $a$ and positions $j$ in block $b$.
\end{definition}

If two $k$-nodes $a$ and $b$ are S-connected,
we say that $b$ is $a$'s  S-Connection (and vice versa).
Additionally, if two nodes are S-connected,
then the corresponding blocks are said to be S-connected too.\looseness=-1

Adapted from the equivalence classes within a word,
each explicit or implicit $k$-node of $T_s$ can be S-connected
to at most one $k$-node of $T_t$
(since they are then representing blocks which on their part represent the same equivalence class of the set of suffixes of $w$ w.r.t. $\sim_k$).

\begin{remark}\label{rem:non-cross}
	The $S$-Connection is {\em non-crossing}.
	This means that
	if the $k$-block $a=[m_a:n_a]$ of $T_s$ is S-connected to the $k$-block $b=[m_b:n_b]$ of $T_t$,
	the $k'$-block $c=[m_c:n_c]$ of $T_s$ is S-connected to the $k'$-block $d=[m_d:n_d]$ of $T_t$,
	and $m_a<m_c$,
	then $m_b<m_d$.
	Similarly, if $n_a<n_c$ then $n_b<n_d$.\looseness=-1
\end{remark}

\subsection{The P-Connection}
For constructing the S-Connection efficiently,
we define a coarser relation called P-Connection
(abbreviation for potential-connection)
that covers the S-Connection.
The P-Connection defines, for each node of $T_s$, a unique node of $T_t$
to which it may be S-connected.
Later, we will attempt to determine and {\em split},
for each level $k$ from $1$ to maximally $n$,
all pairs of (explicit and implicit) $k$-nodes
which were P-connected but are not S-connected.
In a sense, this splitting allows us to gradually refine the P-Connection
until we get exactly the S-Connection.
The P-Connection for the words $s$ and $t$ is defined as follows.

\begin{definition}\label{def:pcon}
	The $0$-nodes of $T_s$ and $T_t$ are P-connected.
	For all levels $k$ of $T_s$,
	if the explicit or implicit $k$-nodes $a$ and $b$
	(from $T_s$ and $T_t$, respectively)
	are P-connected,
	then the $i^{th}$ child of $a$ is P-connected to the $i^{th}$ child of $b$, for all $i$.
	No other nodes are P-connected.
\end{definition}

If $k$-nodes $a$ and $b$ are P-connected,
we say that $b$ is $a$'s P-Connection (and vice versa).

According to its definition, the P-Connection can be computed efficiently in a straightforward manner. This definition is essentially based on the following \cref{lem:pcon}.
However, because \cref{lem:pcon} is not both necessary and sufficient
(unlike, e.g., \cref{lem:split}),
it can only be used to define a relation coarser than the S-Connection
and cannot be used to characterise (and, consequently, compute in a simple way) the S-Connection itself.
Recall that in Simon-Trees the children of a node are numbered right to left.\looseness=-1

\begin{restatable}{lemma}{restatePcon}\label{lem:pcon}
	Let $k\geq 1$.
	Let $a = [m_a : n_a]$ be a $k$-block in $s$
	and $b = [m_b : n_b]$ a $k$-block in $t$ with $a \sim_k b$.
	Then the $i^{th}$ child of the node $a$ of $T_s$ can only be S-connected
	(but it is not necessarily connected)
	to the $i^{th}$ child of the node $b$ of $T_t$, for all $i\geq 1$.
\end{restatable}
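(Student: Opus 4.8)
The plan is to show that if the $i^{th}$ child of $a$ in $T_s$ is S-connected to some child of $b$ in $T_t$, then it must be the $i^{th}$ child, by exploiting the combinatorial structure that governs how $k$-blocks split into $(k+1)$-blocks — namely \cref{lem:split}. The key observation is that \cref{lem:split}(ii) tells us that two positions $i,j$ inside the truncated block $\Adag$ lie in the same $(k+1)$-block exactly when $\SF_1(i,a^{\dag}) = \SF_1(j,a^{\dag})$, i.e., when they see exactly the same set of letters when scanning rightward within $\Adag$. So the partition of $a$ into $(k+1)$-blocks is completely determined by the "first-occurrence" structure of letters read right-to-left inside $\Adag$, together with the slicing-off of $n_a$ (when $k>0$).

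First I would set up the correspondence: since $a \sim_k b$, the words $s[m_a:n]$ and $t[m_b:n']$ have the same subsequences of length $\le k$; in particular, for $k \ge 1$, they have the same set of length-$1$ subsequences, so $\alp(s[m_a:n]) = \alp(t[m_b:n'])$. More is true — I want to argue that $a$ and $b$ split into the \emph{same number} of $(k+1)$-children and that the $i^{th}$ child of $a$ is $\sim_{k+1}$-equivalent to the $i^{th}$ child of $b$ (which is exactly what makes them the only possible S-connection). The right way to see this: consider the leftmost positions $m_a$ and $m_b$. Scanning right-to-left through $\Adag$ and $\mathbf{B}^{\dag}$ respectively, a position is a $(k+1)$-splitting position precisely when its letter is being encountered for the first time in that traversal. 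So the children of $a$, ordered right-to-left, correspond bijectively to the \emph{distinct letters of $\alp(a^{\dag})$ in their order of last occurrence within $a^{\dag}$} (plus the trivial child $[n_a:n_a]$ when $k>0$). Thus the sequence of "representative subsequence-sets" of the children, from right to left, is $\SF_1(\cdot)$-determined.

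The core step is then to prove that the $i^{th}$ child $c$ of $a$ and the $i^{th}$ child $d$ of $b$ satisfy $c \sim_{k+1} d$ — equivalently, using \cref{SC}, that $s[m_c:n] \sim_{k+1} t[m_d:n']$ — and that no other pairing can be S-connected. For the first part, I would pick any $u \in \SF_{\le k+1}(m_c, s)$ with $|u| = k+1$ (the $\le k$ part is immediate from $a \sim_k b$ and the fact that $c,d$ refine $a,b$), write $u = u[1]\cdot u'$ where $u' \in \SF_{\le k}(m_c + 1, s)$; by \cref{lem:split}(ii) applied within $a$, the position $m_c+1$ (or rather the block structure) is controlled by $\SF_1(m_c, a^{\dag})$, and the claim is that this set of "reachable first letters" is the same as $\SF_1(m_d, b^{\dag})$ because both children are the $i^{th}$ from the right. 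Making this precise requires an inductive bookkeeping: assuming the first $i-1$ children of $a$ and $b$ are respectively $\sim_{k+1}$-equivalent and have matching letter-sets, the $i^{th}$ children inherit matching $\SF_1(\cdot, a^{\dag})$ and $\SF_1(\cdot, b^{\dag})$ sets, and then \cref{lem:split}(ii) forces $\SF_{\le k+1}$ to agree on them. For the "only" part: if the $i^{th}$ child of $a$ were S-connected to the $j^{th}$ child of $b$ with $j \ne i$, then by the non-crossing property (\cref{rem:non-cross}), or directly by comparing starting positions, one of them would have a strictly larger $\SF_1(\cdot)$ reachable-letter set than the other, contradicting $\sim_{k+1}$-equivalence of S-connected nodes; alternatively one can use that each $k$-node is S-connected to at most one $k$-node and count.

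The main obstacle I anticipate is handling the interaction between the slicing-off of the ending position $n_a$ (which creates the rightmost child $[n_a:n_a]$ when $k>0$, whose own splitting behaviour is degenerate) and the induction on the remaining children, and — more subtly — rigorously transferring the statement "$\SF_1(m_c, a^{\dag}) = \SF_1(m_d, b^{\dag})$" from the hypothesis $a \sim_k b$. The hypothesis only gives equality of subsequence sets \emph{of length $\le k$ for the whole suffixes} $s[m_a:n]$ and $t[m_b:n']$, whereas $a^{\dag}$ is a strict prefix of that suffix, so I cannot directly read off equality of letter-sets within the truncated block; I will need to argue that the relevant first-occurrence data is nonetheless forced, using that the positions outside $\Adag$ all lie in a singleton $(k)$-block or beyond and contribute nothing new to the short-subsequence behaviour that distinguishes the children. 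This transfer is exactly the place where \cref{lem:split} must be invoked carefully rather than as a black box, and it is where I would spend most of the write-up.
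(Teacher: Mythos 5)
Your plan contains two claims that are actually false, and these undermine the whole strategy. You propose to show that $a$ and $b$ split into the \emph{same number} of $(k+1)$-children and that the $i^{th}$ child of $a$ is $\sim_{k+1}$-equivalent to the $i^{th}$ child of $b$. Neither holds. Take $s = \mathtt{aabb}$ and $t = \mathtt{abab}$: they satisfy $s\sim_1 t$; the $1$-block $a=[1:2]$ of $s$ is $\sim_1$-equivalent to the $1$-block $b=[1:3]$ of $t$, yet $a$ has two children while $b$ has three. Likewise, in $s=\mathtt{ab}$, $t=\mathtt{ba}$ the two $1$-blocks starting at position $1$ are P-connected but not $\sim_2$-equivalent. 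The lemma deliberately says ``can only be S-connected (but it is not necessarily connected)'' — it is a constraint, not an equivalence — so any plan whose core is to prove $\sim_{k+1}$-equivalence of the $i^{th}$ children is doomed from the start.

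The part of the lemma that actually needs proving is the ``only'' direction, and here your sketch stops short of closing the gap. You observe (correctly, and this is the right structural fact) that the $i^{th}$ child $a'=[m_{a'}:n_{a'}]$ satisfies $i=|\alp(s[n_{a'}:n_a-1])|+1$; a mismatch $j\ne i$ therefore produces a letter $x$ in exactly one of $\alp(s[n_{a'}:n_a-1])$, $\alp(t[n_{b'}:n_b-1])$. But from there you reach for ``contradicting $\sim_{k+1}$-equivalence'' via the non-crossing remark or by counting, which does not work: the alphabet mismatch is confined to the truncated block $a^{\dag}$, not the full suffix, so it does not directly contradict $\alp(s[n_{a'}:n])=\alp(t[n_{b'}:n'])$; and the non-crossing/counting arguments presuppose other pairs being S-connected, which is not given. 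The paper instead turns the mismatch into an explicit distinguishing word of length $\le k+1$: let $g=\nextpos(n_{a'},x)$ and $g'=\nextpos(n_{b'},x)$; then $g+1$ is still inside the $k$-block $a$ while $g'+1$ lies strictly beyond $n_b$, so from $a\sim_k b$ one gets $s[g+1:n]\nsim_k t[g'+1:n']$, hence a length-$\le k$ distinguisher $w$, hence $xw$ of length $\le k+1$ distinguishes $s[n_{a'}:n]$ from $t[n_{b'}:n']$, contradicting the assumed S-connection. That explicit construction is the missing step in your plan.
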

\begin{proof}
	Let us consider $a'=[m_{a'}:n_{a'}]$,
	the $i^{th}$ child of the node $a$.
	According to \cref{lem:split},
	we have that $i=|\alp(s[n_{a'}:n_a-1])|+1$.
	Let us now assume, for the sake of a contradiction,
	that  $a'$ is $S$-connected to the node $b'=[m_{b'}:n_{b'}]$,
	the $j^{th}$ child of $b$ with $j\neq i$.
	We will show how this leads to a contradiction for $j<i$.
	A similar proof works for $j>i$.
	Clearly, $j=|\alp(t[n_{b'}:n_b-1])|+1$.
	As $j<i$, it follows that there exists a letter $x\in \alp(s[n_{a'}:n_a-1])\setminus \alp(t[n_{b'}:n_b-1])$.
	Let $g=\nextpos (n_{a'},x)$ and $g'=\nextpos (n_{b'},x)$.
	We have that $g\leq n_a-1$ and $g'> n_b-1$.
	This means that $s[g+1:n]\nsim_k t[g'+1:n']$,
	so there exists a subsequence $w$ that is in only one of the words $s[g+1:n]$ and $t[g'+1:n']$.
	As such, $xw$ occurs in only one of the words $s[g:n]$ and $t[g':n']$,
	and, consequently, in only one of the words  $s[n_{a'}:n]$ and $t[n_{b'}:n']$.
	Thus, we have reached a contradiction:
	$a'$ and $b'$ are not $S$-connected.
	Accordingly, we need to have $j=i$.
	
	The same proof actually shows the stronger result:
	For $i>1$,
	if the $i^{th}$ child of the node $a$ is the node $a'=[m_{a'}:n_{a'}]$
	and the $i^{th}$ child of the node $b$ is the node $b'=[m_{b'}:n_{b'}]$
	and $a'\sim_{k+1} b'$,
	then $\alp(s[n_{a'}:n_a-1])=\alp(t[n_{b'}:n_b-1])$.
\end{proof}

It is not hard to see that,
in the spirit of \cref{rem:non-cross},
the P-Connection is non-crossing.
Moreover, by \cref{lem:pcon},
if the $k$-blocks $a$ and $b$ are S-connected,
they are also P-connected.
It is very important to note that a pair of nodes
whose parent-nodes are not S-connected
is also not S-connected.
So, as our approach is to refine the P-Connection till the S-Connection is reached,
we can immediately decide that a pair of nodes $(a,b)$ is not in the S-Connection
when the pair consisting of their respective parent-nodes is not in the S-Connection. 

\subsection{From P- to S-Connection}
\subparagraph*{Preliminary transformation.}
As mentioned, our algorithm solving \maxkproblem\ uses the Simon-Trees of $s$ and $t$.
To make the exposure simpler,
we make the following simple transformation of the trees.
If $a$ is a $k$-node such that $a$ is a singleton,
we add as a child of this node a $(k+1)$-node representing the same block $a$
(this was an implicit node before, now made explicit);
the newly added node on level $k+1$ does not have any children
(i.e., this procedure is not applied recursively).
Before, by \cref{lem:split},
all blocks $[i:i]$ of $w$ appeared explicitly exactly once in $T_w$.
Therefore,
each singleton-block $[i:i]$ of $s$ (respectively, $t$) appears now exactly twice in $T_s$ (respectively, $T_t$).

In general, these now explicit nodes are used to guarantee the existence of a P-connected node (implicit or explicit) for every explicit singleton node on some level $k+1$ that was on a splitting position on level $k$,
so we can determine singleton nodes that are $\sim_k$- but not $\sim_{k+1}$-congruent to the corresponding nodes in the other tree.
The transformation has the following direct consequence that we will use:
each singleton-block $a$ appears now on two consecutive levels.
While the node corresponding to $a$ on the higher level may be S-connected to a node corresponding to a non-singleton-block,
the node corresponding to $a$ on the lower level may be S-connected only to a singleton-node.

As a second consequence,
it is worth noting that explicit nodes might be connected to implicit nodes, too.
However, this is only true for explicit nodes which were added during the transformation described above,
i.e., singleton explicit nodes.
Explicit nodes which are not singletons cannot be connected to implicit nodes.

\subparagraph*{Refining the P-Connection.}
The main step of our approach is,
while considering the levels of the trees $T_s$ and $T_t$ in increasing order,
to identify the pairs of P-connected nodes from the respective levels
which are not S-connected and consequently {\em split} them.
At the same time, we identify the pairs of singleton-blocks occurring explicitly on higher levels
(and only implicitly on the current levels)
which are not S-connected on this level,
and also {\em split} them {\em on the current level}.
For simplicity of exposure, when we split two $k$-blocks,
we say that we $k$-split them.
In order to implement this idea,
we use the following \cref{lem:2words-block-equivalence} to define a splitting criterium.\looseness=-1

We introduce first some notations. 
For $\w\in \{s,t\}$, a position $j \leq |\w|$, and a letter $x$,
we define $\nextpos_\w(j,x)$ as the leftmost position
where $x$ occurs in $\w[j:|\w|]$,
or as $\infty$ if $x\notin \alp(\w[j:|\w|])$.
For a block $a=[m_a:n_a]$ of the word $\w$ and a letter $x$,
we define $\nextpos_\w(a,x)=\nextpos_\w(n_a,x)$.
We generally omit the subscript $\w$ when it is clear from the context.
Furthermore, we define $\alp(a)$ for a block $a = [m_a:n_a]$ as $\alp(w[m_a:n_a])$.

\begin{restatable}{lemma}{restateBlockEq}\label{lem:2words-block-equivalence}
	Let $k\geq 1$.
	Let $a = [m_a : n_a]$ be a $k$-block in the word $s$
	and $b = [m_b : n_b]$ a $k$-block in the word $t$
	with $a \sim_k b$.
	Let $a' = [m_{a'} : n_{a'}]$ be a $(k+1)$-block in $a$ 
	and $b' = [m_{b'} : n_{b'}]$ be a $(k+1)$-block in $b$.
	Then $a' \nsim_{k+1} b'$ if and only if there exists a letter $x$ such that 
	$s[\nextpos(a', x) + 1 : n] \nsim_k t[\nextpos(b', x) +1 : n']$.
\end{restatable}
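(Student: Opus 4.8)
The plan is to prove both directions by analyzing which subsequences witness $a' \nsim_{k+1} b'$, using the fact that $a \sim_k b$ to control the behavior "below" the blocks $a'$ and $b'$. First I would fix the setup: since $a' \nsim_{k+1} b'$, there is, without loss of generality, a word $v$ of length at most $k+1$ with $v \in \SF_{\leq k+1}(m_{a'}, s)$ but $v \notin \SF_{\leq k+1}(m_{b'}, t)$ (the positions inside a block all behave identically, so I may use the starting positions). Since $v$ is nonempty — the empty word is in every spectrum — write $v = x v'$ with $x = v[1]$ and $|v'| \le k$. The key observation is that $v = xv'$ occurs in $s[m_{a'}:n]$ iff $x$ occurs somewhere in $s[m_{a'}:n]$, say first at position $\nextpos(m_{a'},x) = \nextpos(a',x)$ after replacing $n_{a'}$ by $m_{a'}$ appropriately, and $v'$ occurs in the remaining suffix $s[\nextpos(a',x)+1:n]$; and similarly for $t$. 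So I would first reduce the statement about $\SF_{\leq k+1}$ at $a'$, $b'$ to statements about $\SF_{\leq k}$ at the shifted positions $\nextpos(a',x)+1$ and $\nextpos(b',x)+1$, over all candidate first letters $x$.

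For the "only if" direction: from $v = xv' \in \SF_{\leq k+1}(m_{a'},s) \setminus \SF_{\leq k+1}(m_{b'},t)$ I want to produce the letter $x$ asserted by the lemma. The delicate case is when $x \notin \alp(t[m_{b'}:|t|])$ at all, i.e. $\nextpos(b',x) = \infty$: then I need to argue that the same $x$ already fails to appear suitably in $s$ as well, or rather that this situation is actually excluded or handled by the $a \sim_k b$ hypothesis together with the non-crossing structure of blocks — concretely, I expect to use that $x$ must lie in $\alp(s[m_{a'}:n_a-1])$-type sets controlled by the block structure of $a$ and $b$ (as in \cref{lem:pcon}), so that $\nextpos_s(a',x)$ and $\nextpos_t(b',x)$ land in comparable places and $s[\nextpos(a',x)+1:n] \nsim_k t[\nextpos(b',x)+1:n']$ follows because $v'$ distinguishes them. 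When both nextpos values are finite this is immediate: $v' \in \SF_{\leq k}(\nextpos(a',x)+1,s)$ and $v' \notin \SF_{\leq k}(\nextpos(b',x)+1,t)$, giving exactly the claimed $x$. If instead it is $t$ that contains $v$ and $s$ that does not, the roles swap symmetrically.

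For the "if" direction: suppose there is a letter $x$ with $s[\nextpos(a',x)+1:n] \nsim_k t[\nextpos(b',x)+1:n']$, witnessed by some $v'$ of length at most $k$ in one suffix and not the other, say in $s$'s suffix. Then $x v'$ has length at most $k+1$, and I would check that $xv' \in \SF_{\leq k+1}(m_{a'},s)$ — because $x$ occurs at $\nextpos(a',x)$ which is at most $n_{a'}$... here one must be slightly careful: $\nextpos(a',x) = \nextpos(n_{a'},x)$ could be strictly to the right of $n_{a'}$, i.e. outside the block $a'$, so $x$ occurs in $s[n_{a'}+1:n]$ rather than in $a'$ itself. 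In that case $x \notin \alp(a')$ and I still get $xv' \in \SF_{\leq k+1}(\text{position } n_{a'}, s)$, hence in $\SF_{\leq k+1}(m_{a'},s)$ since $\SF$ is monotone under taking longer suffixes. Meanwhile $xv' \notin \SF_{\leq k+1}(m_{b'},t)$: if $\nextpos(b',x) = \infty$ this is because $x$ does not occur at all in the relevant suffix of $t$; if finite, it is because $v' \notin \SF_{\leq k}(\nextpos(b',x)+1,t)$ and the first occurrence of $x$ is forced. Either way $a' \nsim_{k+1} b'$.

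The main obstacle I anticipate is bookkeeping the boundary behavior of $\nextpos$ relative to the block endpoints — specifically, reconciling $\nextpos_\w(a',x) = \nextpos_\w(n_{a'},x)$ (which may point outside $a'$) with the requirement that the distinguishing subsequence genuinely starts inside block $a'$ and has length at most $k+1$ — and handling the $\nextpos = \infty$ cases cleanly using the hypothesis $a \sim_k b$ together with \cref{lem:split} and \cref{lem:pcon}, rather than letting those degenerate cases break the equivalence. Once the correspondence "$v = xv'$ at $a'$ $\leftrightarrow$ $v'$ at $\nextpos(a',x)+1$" is set up carefully and symmetrically for $s$ and $t$, both implications should fall out.
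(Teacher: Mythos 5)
Your proposal is correct and takes essentially the same route as the paper's proof: both decompose a distinguishing word as $v=xv'$ and use greedy matching of the first letter at its $\nextpos$-occurrence to reduce the $(k+1)$-question at $a',b'$ to a length-$k$ comparison at the shifted positions $\nextpos(a',x)+1$ and $\nextpos(b',x)+1$. The two points you flag as delicate are in fact simpler than you anticipate: work with $n_{a'}$ rather than $m_{a'}$ (they have the same $(k+1)$-spectrum, and $\nextpos(n_{a'},x)$ \emph{is} the lemma's $\nextpos(a',x)$, so the discrepancy disappears), and note that $a\sim_k b$ with $k\ge1$ already gives $a'\sim_1 b'$, i.e.\ $\alp(s[n_{a'}:n])=\alp(t[n_{b'}:n'])$, so $\nextpos(a',x)$ and $\nextpos(b',x)$ are both finite or both infinite without appealing to the P-Connection or non-crossing structure.
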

\begin{proof}
	For completeness, we first introduce a notation. 
	For $w\in \{s,t\}$, a position $j\in w$, and a letter $x$,
	we define $\nextpos_w(j,x)$ as the leftmost position
	where $x$ occurs in $w[j:|w|]$,
	or as $\infty$ if $x\notin \alp(w[j:|w|])$.
	For a block $a=[m_a:n_a]$ of the word $w$ and a letter $x$,
	we define $\nextpos_w(a,x)=\nextpos(n_a,x)$.
	We generally omit the subscript $w$ as it is clear from the context to which we refer.
	
	Let us assume $a' \nsim_{k+1} b'$
	and show that there exists a letter $x$
	such that $s[\nextpos(a', x) +1 : n] \nsim_k t[\nextpos(b', x) + 1 : n']$.
	As $a' \nsim_{k+1} b'$,
	we can assume without loss of generality that there exists a word $w$, with $|w|=k+1$,
	such that $w \in \SF_{k+1}(n_{a'}, s) \setminus \SF_{k+1}(n_{b'}, t)$.
	Let $x$ be the first letter of $w$, i.e., $w=xw'$ for some word $w'$. 
	If $s[\nextpos(a', x) + 1:n] \sim_k t[\nextpos(b', x) + 1:n']$ would hold,
	then $w'$ would be a subsequence of length $k$,
	in both $s[\nextpos(a', x) + 1:n]$ and $t[\nextpos(b', x) + 1:n']$.
	Thus, $w=xw'$ would be a subsequence of both $s[\nextpos(a', x):n]$ and $t[\nextpos(b', x):n']$, so $w\in \SF_{k+1}(n_{b'}, t)$.
	Thus, $s[\nextpos(a', x)+1:n] \nsim_k t[\nextpos(b', x)+1:n']$
	
	Now, let there be a letter $x$
	such that $\SF_k(\nextpos(n_{a'}, x)+1, s) \neq \SF_k(\nextpos(n_{b'}, x)+1, t)$.
	Then we can assume without losing generality that there exists $w'$ of length $k$ in $\SF_k(\nextpos(n_{a'}, x)+1, s) \setminus \SF_k(\nextpos(n_{b'}, x)+1, t)$.
	Clearly, $w=xw'$ is in $\SF_k(\nextpos(n_{a'}, x), s)$ $\subseteq \SF_k(n_{a'}, s)$
	but $w=xw' \notin \SF_k(\nextpos(n_{b'}, x), t)$,
	so $w \notin \SF_k(n_{b'}, t)$.
	This means that $s[n_{a'}:n] \nsim_{k+1} t[n_{b'}:n']$.
\end{proof}

\begin{figure}
	\centering
	\includestandalone[width=0.7\linewidth]{fig-lem-2words-equivalence-classes}
	\caption{Illustration of \cref{lem:2words-block-equivalence}.}
	\label{fig:lem-2words-block-equivalence}
\end{figure}

The main idea of this lemma
(illustrated in \cref{fig:lem-2words-block-equivalence})
is that two $(k+1)$-blocks $a'$ and $b'$ are not S-connected,
although their parents were S-connected,
if and only if we can find a letter $x$
such that $s[\nextpos(a', x) + 1 : n] $ and $ t[\nextpos(b', x) + 1 : n']$
are not $k$-equivalent but $(k-1)$-equivalent.
That is, $\nextpos(a', x) + 1 $ and $\nextpos(b', x) + 1$ should occur,
respectively, in two $k$-blocks which were split,
but whose parents were S-connected.
A word distinguishing the suffixes starting in $a'$ from those starting in $b'$ has the first letter $x$,
and is continued by the word of length $k$ which distinguishes $s[\nextpos(a', x) + 1 : n]$ and $t[\nextpos(b', x) +1 : n']$. 

\subparagraph*{Identifying P-connected pairs to be split.}
\label{sec:identify-pairs}
When going through the trees level by level,
the $1$-blocks
(all occuring explicitly on level $1$ of $T_s$ and respectively $T_t$)
which are S-connected can be easily and efficiently identified:
the $i^{th}$ node $a=[m_a:n_a]$ on level $1$ of $T_s$ is connected to the $i^{th}$ node $b=[m_b:n_b]$ of $T_t$
if and only if $\alp(s[n_a:n])=\alp(t[n_b:n'])$.
All the other P-connected pairs of $1$-blocks are not S-connected,
so they are $1$-split.

The identification of the pairs of $(k+1)$-blocks and pairs of singletons which need to be $(k+1)$-split is based on \cref{lem:2words-block-equivalence}.
The idea is the following. 
A pair of P-connected $(k+1)$-blocks $a' = [m_{a'} : n_{a'}]$ of $T_s$
and $b' = [m_{b'} : n_{b'}]$ of $T_t$ 
is not S-connected
if and only if there exists a letter $x$
such that $s[\nextpos(a', x) + 1 : n] \nsim_k t[\nextpos(b', x) +1 : n']$.
So, in order to be able to $(k+1)$-split two nodes (whose parents are S-connected),
we need to identify two positions $i$ and $j$ (and a corresponding letter $x$),
with $i=\nextpos(a', x) + 1$ and $j=\nextpos(b', x)+1$
which were $k$-split but not $(k-1)$-split.
We search for position $i$ inside the $k$-blocks of $T_s$,
and try to see where position $j$ may occur in the blocks of $T_t$
such that these two positions are not in S-connected $k$-blocks.
To find the position $i$ (and the corresponding $j$) we analyse two cases.\looseness=-1

\begin{description}
	\item[The first case (A)]
	is when $i$ occurs inside an implicit $k$-node,
	which is the singleton-$k$-block $[i:i]$.
	On the lowest level where this block appeared as an explicit node,
	it was S-connected to a node $[j:j]$ representing a singleton too,
	according to \cref{sec:s-connection} and \cref{lem:split}.
	Thus, position $i$ can only be $k$-split from the position $j$ of $t$ to which it was S-connected
	(it was already disconnected from all other positions on level $\ell$).
	If $i$ and $j$ are both directly preceded by the same symbol (say $x$),
	then the pair $(i,j)$ gives us exactly the positions we were searching for.
	\item[The second case (B)]
	is when $i$ occurs inside an explicit $k$-node in $T_s$.
	Let $A=[m_A:n_A]$ and $B=[m_B:n_B]$ be two $(k-1)$ blocks from $T_s$ and $T_t$, respectively,
	such that $A\sim_{k-1} B$, and $a=[m_a:n_a]$ and $b=[m_b:n_b]$ be the $\ell^{th}$ child of $A$ and $B$, respectively.
	Clearly, $b$ might be explicit, implicit, or even empty.
	If $b$ is non-empty, the following holds.
	All positions of $a$ are $k$-split from the positions $[m_B:m_b-1]$ and from the positions $[n_b+1:n_B]$,
	because $a$ is not P-connected to the blocks covering those positions.
	Also, if $a$ and $b$ are not S-connected
	then all positions of $a$ are also $k$-split from the positions of $b$.
\end{description}%
\begin{figure}[!htb]
	\centering
	\includegraphics[width=\linewidth]{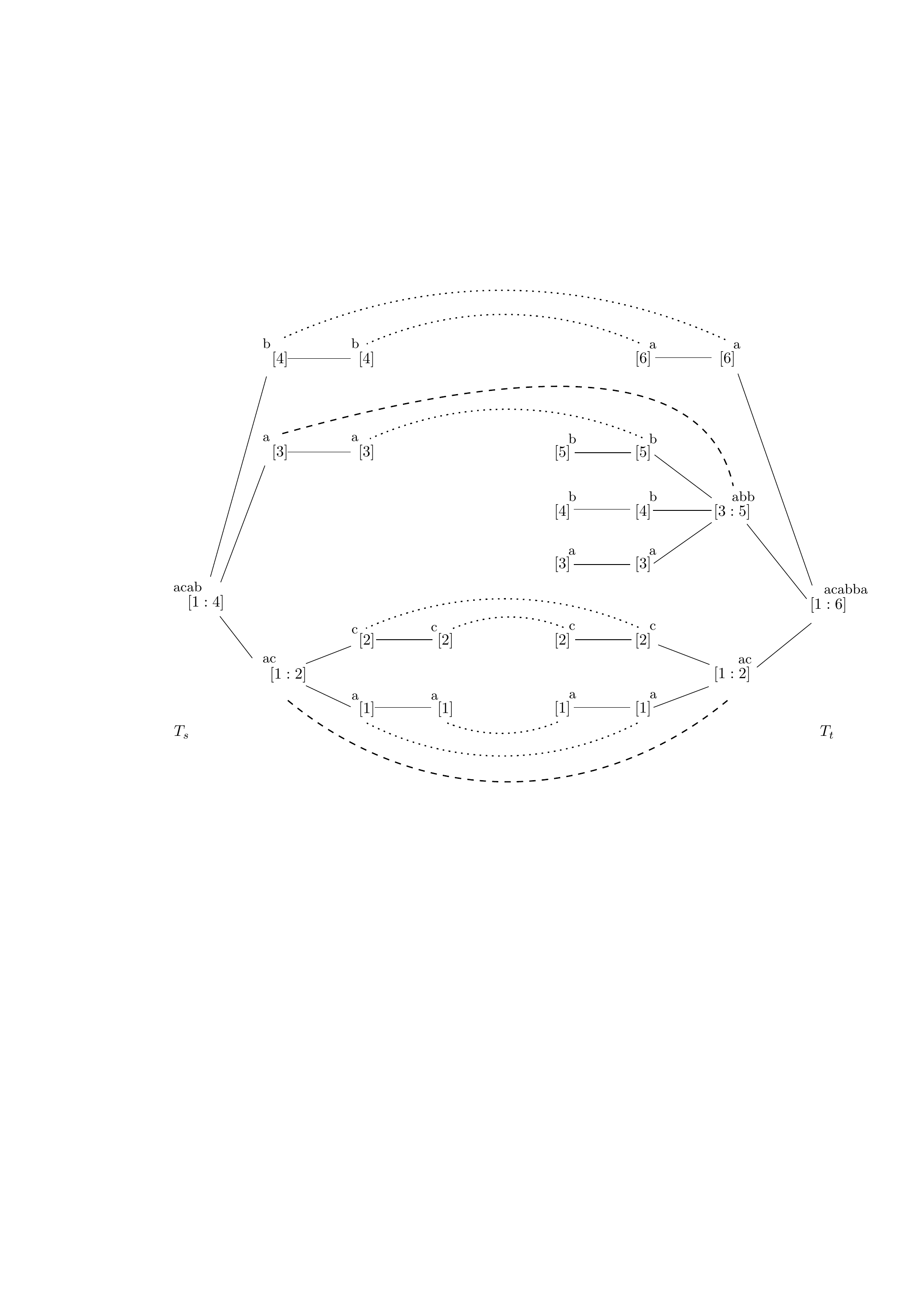}
	\caption{Two transformed Simon-Trees with their P- and S-Connection.}
	\label{fig:two-simon-trees}
\end{figure}
\subparagraph*{An example.} Let $s=acab$ and $t=acabba$ be two words.
Their respective Simon-Trees $T_s$ and $T_t$ are depicted in \cref{fig:two-simon-trees} along with their P- and S-Connection.
Note that for the sake of not crossing edges and simplifying the presentation,
the Simon-Trees in \cref{fig:two-simon-trees} are rotated by 90 degrees to the right and to the left, respectively.
Thus, the roots of the trees are on the outer left and right side of the figure.
Additionally, the tree $T_t$ on the right is mirrored,
so that nodes from a P-connected pair are vis-à-vis.
Also, the trees already contain the singleton nodes that were originally implicit but are now made explicit by our aforementioned transformation.
From the figures it becomes also clear
that this transformation is needed in the case of the singleton-node $[5:5]$ from the $2$nd level of $T_t$
which is P-connected to the singleton-node $[3:3]$ from the $2$nd level of $T_s$.

In the beginning we are considering all possibly connected blocks by determining all P-connected pairs.
While the dashed and dotted lines connect, respectively, the nodes of all the pairs from the P-Connection,
the S-Connection is obtained by splitting step by step P-connected pairs that cannot be equivalent with regard to their respective level.
The dotted edges symbolize exactly these split pairs,
and in the end, the S-Connection consists only of the pairs connected with a dashed edge.

Following \cref{thm:main,thm:main1} stated at the very end of this paper,
we get the largest $k$ for which the two words $s$ and $t$ are $k$-equivalent by finding the largest $k$ for which the $k$-blocks containing position $1$ of both words are S-connected.
In our example, the blocks $s[1:4]$ and $t[1:6]$ representing the complete words are naturally $0$-equivalent.
Furthermore, as seen in \cref{fig:two-simon-trees}, the blocks $s[1:2]$ and $t[1:2]$ on level $1$ are S-connected,
but the blocks $s[1:1]$ and $t[1:1]$ are not S-connected on level $2$.
Thus, the largest $k$ for which $s \sim_k t$ holds is $1$. \looseness=-1

\subparagraph*{Path to efficiency.}
Taking $(i,j)$ to be each position of block $a$ paired with each of the positions from which it is $k$-split,
according to the above, might not be efficient.
However, the combinatorial \cref{lem:2words-block-equivalence} allows us to switch slightly the point of view and ultimately obtain an efficient method. 
We will traverse the $k$th level of a Simon-Tree $T_s$ from right to left and when considering a node $a$ on this level,
our approach is to determine which nodes should be $k+1$-split due to any pair $(i,j)$, where $i$ is some position occurring in the block $a$.
The mechanism allowing us to do this is stated in \Cref{lem:aksplit}, and this essentially explains how to determine all the $k+1$-splits determined by positions of $a$ (and their corresponding pairing from $b$). We show how to proceed in both cases (A) and (B). Clearly, a symmetrical approach would also work (so looking at nodes in $T_t$ and positions in $t$).\looseness=-1

Firstly, we need a few more notations.
For a block $a=[m_a:n_a]$ of $s$ or $t$ and a letter $x$,
let $\prevpos(a,x)$ be the rightmost occurrence of $x$ in $s[1:m_a-2]$
(or $0$ if $x\notin \alp(s[1:m_a-2])$), 
and let $\rightpos (a,x)$ be the rightmost occurrence of $x$ in $s[1:n_a-1]$
(or $0$ if $x\notin \alp(s[1:n_a-1])$). 

The setting in which \cref{lem:aksplit} is stated is the following.
We have two P-connected $k$-blocks $a=[m_a:n_a]$ and $b=[m_b:n_b]$ from $T_s$ and $T_t$, respectively,
whose parent-nodes (explicit or implicit) are S-connected. 
The lemma defines a necessary and sufficient condition for a pair of (explicit or implicit) $(k+1)$-nodes $(a',b')$
to be $(k+1)$-split because there exists a letter $x$ and a pair of positions $(i,j)$,
with $i=\nextpos(a', x) + 1$, $i\in [m_a:n_a]$, $j=\nextpos(b', x)+1$,
and $s[i:n]\nsim_k t[j:n']$.
Such a pair $(a',b')$ is called $(a,k+1)$-split
(that is, $a$ causes the respective split on level $k+1$).
Note that $a'$ and $b'$ are the (explicit or implicit) children of
either $a$ and $b$
or of two $k$-blocks which are left of $a$ and $b$.
In any way, their parents are S-connected;
otherwise $a'$ and $b'$ would have already been split on a higher level.

This setting is also illustrated in \cref{fig:lem-aksplit}.\looseness=-1

\begin{restatable}{lemma}{restateAksplit}\label{lem:aksplit}
	
	For $k\geq 1$, 
	let $a=[m_a:n_a]$ be a $k$-block in $s$
	and $b=[m_b:n_b]$ its P-Connection
	(which is a $k$-block in $t$).
	Then a pair of P-connected $(k+1)$-blocks
	$a' = [m_{a'} : n_{a'}]$ in $s$
	and $b' = [m_{b'} : n_{b'}]$ in $t$
	is $(a,k+1)$-split
	if and only if there exists a letter $x$ in $\alp(s[m_a-1:n_a-1])$ such that at least one of the following holds:
	\begin{enumerate}
		\item $a'$  ends strictly between $\prevpos(a,x)$ and $m_a$
		(i.e., $\prevpos(a,x) < n_{a'}< m_a$),
		and $b'$ ends to the left of $\prevpos(b,x)+1$
		(i.e., $n_{b'} \leq \prevpos(b,x)$). 
		\item $a'$  ends between $\prevpos(a,x)$ and $\rightpos(a,x)$
		(i.e., $\prevpos(a,x) < n_{a'}$ $\leq$ $\rightpos(a,x)$),
		and $b'$ ends between $\rightpos(b,x)$ and $n_b$
		(i.e., $\rightpos(b,x)< n_{b'} \leq n_b$).
		\item $a\nsim_k b$, $a'$  ends between $\prevpos(a,x)$ and $\rightpos(a,x)$
		(i.e., $\prevpos(a,x) < n_{a'}\leq \rightpos(a,x)$),
		and $b'$ ends between $\prevpos(b,x)$ and $\rightpos(b,x)$
		(i.e., $\prevpos(b,x)< n_{b'} \leq \rightpos(b,x)$).
	\end{enumerate}
\end{restatable}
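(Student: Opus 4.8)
My plan is to reduce the whole statement to one positional identity and then split three ways according to where the ``partner'' position $j$ falls relative to the block $b$. Throughout, for a block $c$ of $s$ I write $\SF_k(c)$ for the common value $\SF_k(m_c,s)=\SF_k(n_c,s)$ taken at any position of $c$ (and symmetrically for $t$). First I would prove the identity $(\star)$: for every letter $y$ and position $p$, $\nextpos_s(p,y)+1\in[m_a:n_a]$ if and only if $\prevpos(a,y)<p\le\rightpos(a,y)$; the proof just unwinds the definitions, since $\nextpos_s(p,y)\le n_a-1$ means exactly that $y$ occurs in $s[p:n_a-1]$ (i.e.\ $\rightpos(a,y)\ge p$), while $\nextpos_s(p,y)\ge m_a-1$ means exactly that $y$ does not occur in $s[p:m_a-2]$ (i.e.\ $\prevpos(a,y)<p$). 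When $(\star)$ holds we moreover get $\rightpos(a,y)>\prevpos(a,y)\ge 0$, so the rightmost $y$ in $s[1:n_a-1]$ lies in $[m_a-1:n_a-1]$, i.e.\ $y\in\alp(s[m_a-1:n_a-1])$ --- exactly the global quantifier of the lemma. The $t$-version of $(\star)$, together with $\prevpos(b,y)\le\rightpos(b,y)$, gives the trichotomy I will use: $\nextpos_t(n_{b'},y)+1$ is $<m_b$, lies in $[m_b:n_b]$, or is $>n_b$ according as $n_{b'}\le\prevpos(b,y)$, $\prevpos(b,y)<n_{b'}\le\rightpos(b,y)$, or $\rightpos(b,y)<n_{b'}$.

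Next I would isolate two structural facts. \textbf{(1)} If $(\star)$ holds for $a'$ then $n_{a'}\le\rightpos(a,x)\le n_a-1$, so the $(k+1)$-block $a'$, being contained in a unique $k$-block (the $(k+1)$-blocks refine the $k$-blocks), is either a child of $a$ or lies entirely to the left of $a$; since $a'$ is P-connected to $b'$ their parents are P-connected, and as the P-Connection is non-crossing, $b'$ is correspondingly a child of $b$ or lies to the left of $b$ --- either way $n_{b'}\le n_b$. \textbf{(2)} (Uniqueness.) If a position $i$ lies in block $a$ and $\SF_k(i,s)=\SF_k(j,t)$, then writing $B_j$ for the $k$-block of $t$ containing $j$, the blocks $a$ and $B_j$ are S-connected, hence P-connected by \cref{lem:pcon}; since $b$ is the unique P-Connection of $a$, this forces $B_j=b$.

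Now I would unfold the definition: $(a',b')$ is $(a,k+1)$-split iff there is a letter $x$ and a position $i=\nextpos_s(a',x)+1\in[m_a:n_a]$ with $s[i:n]\nsim_k t[j:n']$ for $j=\nextpos_t(b',x)+1$. By $(\star)$ the condition $i\in[m_a:n_a]$ is precisely ``$x\in\alp(s[m_a-1:n_a-1])$ and $\prevpos(a,x)<n_{a'}\le\rightpos(a,x)$'', and then $\SF_k(i,s)=\SF_k(a)$, so the split depends only on $\SF_k(a)$ versus the class of $t[j:n']$. Apply the trichotomy to $j$. If $j>n_b$ (equivalently, using fact (1), $\rightpos(b,x)<n_{b'}\le n_b$; this also absorbs the degenerate $j=n'+1$, $t[j:n']=\varepsilon$): the $k$-block of $t$ containing $j$ lies strictly right of $b$, hence differs from $b$, so fact (2) gives $\SF_k(i,s)\neq\SF_k(j,t)$ unconditionally --- this is Case 2. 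If $j<m_b$ (equivalently $n_{b'}\le\prevpos(b,x)$): the block of $j$ lies strictly left of $b$, so fact (2) again gives a split unconditionally; and necessarily $n_{a'}<m_a$, for otherwise $a'$ would be a child of $a$, forcing (fact (1)) $b'$ to be a child of $b$ and $n_{b'}\ge m_b>\prevpos(b,x)$, a contradiction --- this is Case 1. If $m_b\le j\le n_b$ (equivalently $\prevpos(b,x)<n_{b'}\le\rightpos(b,x)$): then $\SF_k(j,t)=\SF_k(b)$, so the split occurs iff $\SF_k(a)\neq\SF_k(b)$, i.e.\ iff $a\nsim_k b$ --- this is Case 3. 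Conversely, each of Cases 1--3 read with $(\star)$ supplies the letter $x$, pins $j$ into the matching region, and (for Case 3, via $a\nsim_k b$) produces the required inequality of spectra, so a witnessing pair $(i,j)$ exists; this gives the equivalence.

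The step I expect to be the real obstacle is the ``only if'' half of the trichotomy, in particular showing in the $j<m_b$ branch that the split genuinely takes the form of Case 1, namely $n_{a'}<m_a$: this is exactly where one needs the non-crossing property of the P-Connection together with the fact that a $(k+1)$-block sits inside a single $k$-block (fact (1)). Everything else is routine once $(\star)$ is in place; the only care needed is with boundary situations --- $x$ not occurring to the right of $n_{b'}$ in $t$ (then $j=n'+1$ and $t[j:n']=\varepsilon$, which is subsumed by Case 2 since $k\ge 1$ and $s[i:n]\neq\varepsilon$), singleton or implicit nodes where $m_c=n_c$, and the leftmost block where $m_a-1=0$ admits no letter $x$ (so no split, matching an empty quantifier).
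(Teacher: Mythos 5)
Your proof is correct, and it follows the same fundamental casework as the paper (classifying $j=\nextpos_t(b',x)+1$ according to whether it lands left of $b$, inside $b$, or right of $b$). You make the argument somewhat more self-contained than the paper's: the positional identity $(\star)$ is stated once and reused to translate every positional condition into the $\prevpos/\rightpos$ form, and your fact~(2), that $\SF_k(i,s)=\SF_k(j,t)$ with $i\in a$ forces the $k$-block of $j$ to equal $b$, replaces the paper's implicit appeal to the positions $i,j$ lying in S-connected $(k-1)$-blocks $A,B$. This cleanly justifies the ``unconditional split'' in your $j<m_b$ and $j>n_b$ branches, which the paper leaves a bit terse. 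You also correctly identify and settle the one step that genuinely needs the non-crossing property (that $n_{a'}<m_a$ in the $j<m_b$ branch), and you handle the degenerate $j>n'$ case explicitly. The only minor blemish is the closing remark that a leftmost block with $m_a=1$ ``admits no letter $x$'': the quantifier there is simply $\alp(s[1:n_a-1])$, which is typically non-empty, so that sentence is inaccurate, though it is an aside and does not affect the argument.
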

\begin{proof}
	Assume $A=[m_A:n_A]$ and $B=[m_B,n_B]$ are two $(k-1)$ (implicit or explicit) blocks from $T_s$ and $T_t$, respectively,
	and $a=[m_a:n_a]$ and $b=[m_b,n_b]$ the $\ell^{th}$ children of $A$ and $B$
	(if $A$ or, respectively, $B$ is implicit, then $a$ or, respectively, $b$ is also implicit). 
	
	Let us consider a pair of P-connected $(k+1)$-nodes
	$a' = [m_{a'} : n_{a'}]$ in $s$ and
	$b' = [m_{b'} : n_{b'}]$ in $t$ which is $(a,k+1)$-split.
	Then there exist two positions $i$ and $j$,
	with $i=\nextpos(a', x) + 1$ and $j=\nextpos(b', x)+1$
	which were $k$-split
	but whose $(k-1)$-blocks are S-connected,
	and, moreover, $i$ is in $a$.
	This means $j$ must have been in $B$.
	There are three cases to analyse.
	
	\begin{case}{$j=\nextpos(b',x)+1$ is a position of $[m_B:m_b-1]$.}
		As $i=\nextpos(a',x)+1$ is in $a$,
		we get that $a'$ ends after $\prevpos(a,x)$.
		Also, $b'$ is P-connected to $a'$ and the P-Connection is non-crossing,
		so $a'$ should also end before $m_a$.
		Moreover, $b'$ ends before $j=\nextpos(b',x)+1$,
		and, as $\nextpos(b',x)$ returns the position of a letter $x$ in the interval $[m_B-1:m_b-2]$,
		we get that $j \leq \prevpos(b,x)+1$. 
	\end{case}
	
	\begin{case}{$j=\nextpos(b',x)+1$ is a position of $[n_b+1:n_B]$.}
		The analysis of where $a'$ may occur is similar to the first case above:
		as $i=\nextpos(a',x)+1$ is in $a$,
		we get that $a'$ ends after $\prevpos(a,x)$.
		As $i$ is in $a$,
		we get that $a'$ cannot end to the right of the rightmost $x$ in $a$,
		so $\prevpos(a,x) < n_{a'}\leq \rightpos(a,x)$.
		Now, $\nextpos(b',x)+1$ is in $[n_b+1:n_B]$,
		which means that $\nextpos(b',x)$ is not in $b$,
		so $\nextpos(b',x)>\rightpos(b,x)$.
		As $b'$ is P-connected to $a$,
		and the P-Connection is non-crossing,
		we get that $b'$ cannot end to the right of $n_b$.
	\end{case} 
	
	\begin{case}{$j$ is inside $b$ and $a\nsim_k b$
			(so that $i$ and $j$ are $k$-split).}
		The analysis is, however, very similar to the above,
		and the conclusion follows in the same way.
	\end{case}
	
	We will now show the converse. 
	In case 1, we have that $\nextpos (a',x)+1$ is $k$-split from $\nextpos(b',x)+1$,
	as the former is a position inside $a$
	and the latter is a position to the left of $m_b-1$.
	In case 2, we have that $\nextpos (a',x)+1$ is $k$-split from $\nextpos(b',x)+1$,
	as the first is a position inside $a$
	and the second one is a position strictly to the right of $n_b$.
	In case 3, we have that $\nextpos (a',x)+1$ is $k$-split from $\nextpos(b',x)+1$,
	as $\nextpos (a',x)+1$ is a position inside $a$
	and $\nextpos(b',x)+1$ is a position inside $b$,
	and $a$ and $b$ are $k$-split. 
\end{proof}

\begin{figure}
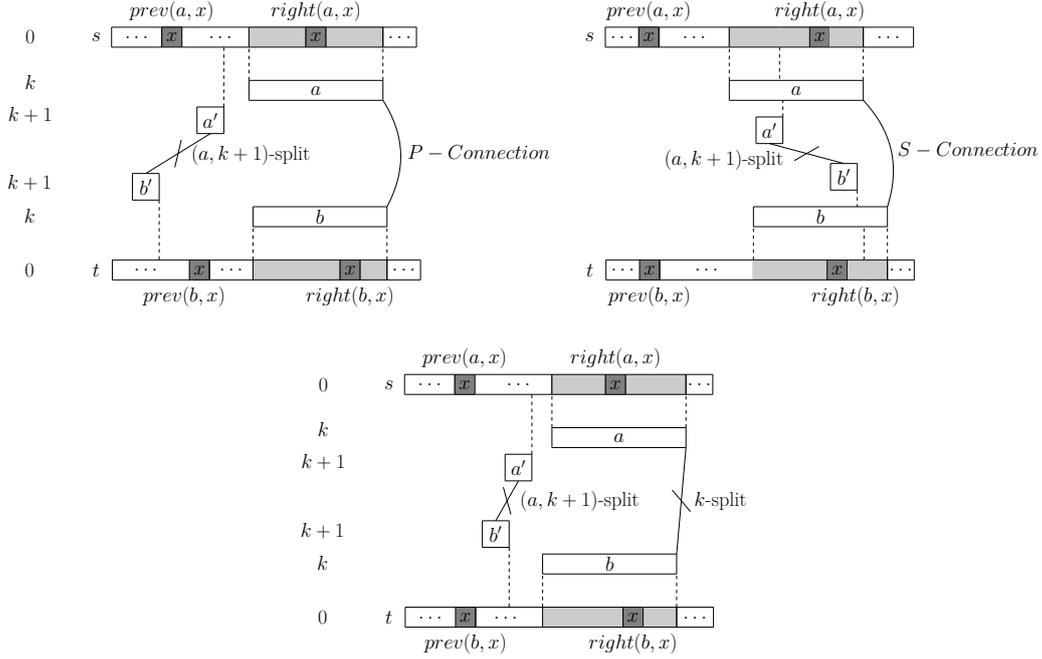

	\centering
	\begin{subfigure}{0.54\linewidth}
		\centering
		\includestandalone[height=0.19\textheight]{fig-lem-aksplit-case1}
	\end{subfigure}%
	\begin{subfigure}{0.46\linewidth}
		\centering
		\includestandalone[height=0.19\textheight]{fig-lem-aksplit-case2}
	\end{subfigure}%
	\vspace{1em}
	\begin{subfigure}{0.5\linewidth}
		\centering
		\includestandalone[height=0.19\textheight]{fig-lem-aksplit-case3}
	\end{subfigure}
	\caption{Illustration of the three cases of \cref{lem:aksplit}.}
	\label{fig:lem-aksplit}
\end{figure}

In \cref{lem:aksplit},
because $k\geq 1$ and the blocks $a'$ and $b'$ are the (explicit or implicit) children of two S-connected $k$-blocks,
it follows that $\alp(s[n_{a'}:n])=\alp(t[n_{b'}:n'])$.
This means, in particular,
that $\nextpos(a',x)\neq \infty$ for some letter $x$
if and only if $\nextpos(b',x)\neq \infty$.

Now, we can explain how to algorithmically apply \cref{lem:aksplit} and find the pairs of $(k+1)$-blocks which should be split.
For this, we can define, and compute in the step where the $k$-split pairs were obtained,
a list $L_k$ of pairs of singleton-$k$-blocks which were $k$-split and a list $H_k$ of all the explicit $k$-nodes of $T_s$ and their $P$-connections.

We first consider each explicit $k$-node $a$ of $T_s$ and its P-Connection,
the node $b$ of $T_t$
(in both cases: when $a$ and $b$ were $k$-split or when they were not).
For $x\in \alp(s[m_a-1:n_a-1])\cup \{s[m_a-1]\}$
(note that the symbols $x\in \alp(s[m_a-1:n_a-1])$ can be identified as the first symbols of the $(k+1)$-blocks
into which $a=s[m_a:n_a]$ is split, except the rightmost one; these are the children of node $a$ except the rightmost one)
we do the following:
\begin{enumerate}
	\item identify each $(k+1)$-block $a'=[m_{a'}:n_{a'}]$ with $\prevpos(a,x) < n_{a'}< m_a$
	and its pair $b'=[m_{b'}:n_{b'}]$.
	Then $(a',b')$ is not in the S-Connection
	if $n_{b'} \leq \prevpos(b,x)$
	(i.e., $a'$ and $b'$ are $(a,k+1)$-split). 
	\item identify each $(k+1)$-block $a'=[m_{a'}:n_{a'}]$ with $\prevpos(a,x) < n_{a'}\leq \rightpos(a,x)$
	and its pair $b'=[m_{b'}:n_{b'}]$.
	Then $(a',b')$ is not in the S-Connection
	if $\rightpos(b,x)< n_{b'} \leq n_b$. 
	\item if $a\nsim_k b$, identify each $(k+1)$-block $a'$ with $\prevpos(a,x) < n_{a'}\leq \rightpos(a,x)$
	and its pair $b'=[m_{b'}:n_{b'}]$.
	Then $(a',b')$ is not in the S-Connection
	if $\prevpos(b,x)< n_{b'} \leq \rightpos(b,x)$. 
\end{enumerate}

For every pair $(a,b)$ of singleton-$k$-blocks
which were $k$-split (from the list $L_k$),
we only perform step 3 from above. 

For each $k$-block $a$ we considered
(explicit or implicit node of $T_s$),
we collect the singleton-$(k+1)$-blocks that were $(a,k+1)$-split,
to be used when computing the $(k+2)$-splits.

The next step is to implement this idea,
i.e., to describe data structures allowing us to identify efficiently the $(k+1)$-blocks $a'$ and $b'$ from above.
We say that a pair of blocks/ nodes $(a',b')$ meets an interval-pair $([p:q],[p':q'])$
if $a'$ ends in $[p:q]$, and $b'$ ends in~$[p':q']$. 

Our approach is the following.
We process the blocks on level $k$
and, for each of them, get (at most) three lists of interval-pairs
(one component is an interval of positions in $s$, the other an interval in $t$).
On level $k+1$, we split each pair of P-connected blocks $(a',b')$
which meets one interval-pair from our list.
A crucial property here is that,
for each interval-pair,
the $(k+1)$-blocks of $s$ which meet it,
and are accordingly split from their P-Connections,
are consecutive (explicit and implicit) $(k+1)$-nodes in $T_s$.
Thus, in order to make use of \cref{lem:aksplit},
we draw on the technical results given by \cref{lem:init,lem:prev,comput_intervals,lem:level1},
which we collected in the following section for the interested reader.\looseness=-1

\subsection{Technical Tools}
\begin{restatable}{lemma}{restateInit}\label{lem:init}
	Given two words $s$ and $t$, with $|s|=n$ and $|t|=n'$, $n\geq n'$,
	and their Simon-Trees $T_s$ and $T_t$,
	we can process the trees $T_s$ and $T_t$ in $O(n+n')$ time
	such that the following information can be retrieved in $O(1)$ time:
	\begin{enumerate}
		\item For an (explicit or implicit) node $a$ of $T_s$,
		the (explicit or implicit) node $b$ of $T_t$ to which it is P-connected.
		\item The $j^{th}$ (from left to right) explicit node on level $k$ of $T_s$ (respectively, $T_t$).
		Note that, because $T_s$ and $T_t$ are ordered trees,
		we can uniquely identify the $j^{th}$ (from left to right) explicit node
		that occurs on level $k$. 
		\item For each position $i$ of $s$,
		the unique position $U[i]$ of $t$ such that the singleton node $[i:i]$ is P-connected to the singleton node $[U[i]:U[i]]$. 
		\item For each position $i$ of $t$,
		the unique position $U'[i]$ of $s$ such that the singleton node $[i:i]$ is P-connected to the singleton node $[U'[i]:U'[i]]$. 
		\item For each position $i$ of $s$ (respectively of $t$) and level $k$,
		the node associated to the $k$-block that starts with $i$,
		if such a node exists.
	\end{enumerate}
\end{restatable}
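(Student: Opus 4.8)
The plan is to perform a constant number of traversals of the two Simon-Trees, each running in linear time, and to precompute for each of the five requested items an array that can afterwards be indexed in $O(1)$ time. Recall from \cref{oneWord} that $T_s$ and $T_t$ have, respectively, $O(n)$ and $O(n')$ explicit nodes, and that the preliminary transformation only makes each singleton node appear on two consecutive levels, so it at most doubles the number of explicit nodes and both trees stay of linear size. Each explicit node will contribute to exactly one per-level bucket and one per-starting-position bucket in what follows, so all the auxiliary lists we build have total size $O(n+n')$.

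\emph{Items 1, 3 and 4.} I would compute the P-Connection by a single lockstep traversal of $T_s$ and $T_t$ starting from the two $0$-nodes, which are P-connected by \cref{def:pcon}. Whenever the traversal is at a P-connected pair $(a,b)$, it matches the $i^{th}$ child of $a$ with the $i^{th}$ child of $b$ (children are numbered right to left in Simon-Trees), stores in the node $a$ of $T_s$ a pointer to $b$ -- all such pointers being initialised to \emph{null}, so that descendants of a node without a P-Connection need no further attention -- and recurses on every pair of corresponding children that both exist; if a corresponding child in one tree is an implicit singleton node, we record the block it represents instead of a pointer. Since each explicit node of $T_s$ is visited at most once and the work at a pair $(a,b)$ is proportional to the number of children of $a$, this costs $O(n+n')$ time. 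During the same traversal, whenever $(a,b)$ is a pair of P-connected singleton nodes $([i:i],[j:j])$ we set $U[i]\gets j$ and $U'[j]\gets i$; entries left unset indicate that no such node exists. The P-Connection of an implicit node $[i:i]$ of $T_s$ is then read off from $U[i]$: its root-to-node sequence of child indices agrees with that of the explicit singleton $[i:i]$ except for trailing $1$'s, and these reach no further node in $T_t$ once the explicit singleton $[U[i]:U[i]]$ (which by the transformation has at most one child and then none) has been passed.

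\emph{Items 2 and 5.} For item 2 I would run a breadth-first traversal of $T_s$ (and, symmetrically, of $T_t$) in which the children of each node are enqueued from left to right, i.e.\ in the reverse of their stored order; since the trees are ordered and the blocks on a level are non-overlapping intervals ordered by position (so all children of a node lying to the left precede all children of a node to the right), the explicit nodes of each level are dequeued in left-to-right order, and appending them to a per-level array -- flattened with a precomputed per-level offset -- yields $O(1)$-time access. For item 5 a depth-first traversal suffices: for each explicit node $[m_a:n_a]$ on level $k$ we append $(k,\text{node})$ to a bucket indexed by $m_a$, and a bucket sort by level makes every bucket sorted. The structural point that makes this into an array is that, for a fixed position $i$, the blocks starting at $i$ form a descending path that always takes the leftmost child, every node of which is explicit (by \cref{lem:split} a non-singleton block has all of its children explicit) down to the explicit singleton $[i:i]$ together with its transformation-added explicit copy; hence the explicit nodes starting at $i$ occupy a \emph{contiguous} range of levels $[k_1(i):k_2(i)]$, so it is enough to store $k_1(i)$ and the sorted bucket. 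A query $(i,k)$ then returns \emph{none} if $k<k_1(i)$, the array entry at offset $k-k_1(i)$ if $k_1(i)\le k\le k_2(i)$, and the implicit singleton block $[i:i]$ (equivalently, the bottom-most explicit singleton node, which represents the same block) if $k>k_2(i)$.

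\emph{Main obstacle.} The traversals themselves are routine; the only point requiring care is the interplay with implicit nodes and the preliminary transformation, namely ensuring that every query is well-defined -- in particular that items 1 and 5 return a meaningful answer on the (in principle unboundedly many) levels on which a block occurs only implicitly. This is handled by the two observations above: implicit-singleton P-Connections are captured by the arrays $U$ and $U'$, and an implicit singleton block is represented, when queried, by an equivalent explicit node. Given the contiguity of levels exploited for item 5 and the left-to-right ordering produced by the reversed-children breadth-first traversal, all five items reduce to array look-ups after $O(n+n')$-time preprocessing.
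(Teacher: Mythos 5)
Your proof is correct and takes essentially the same approach as the paper's: a lockstep top-down traversal of $T_s$ and $T_t$ computes the P-Connection pointers and the arrays $U,U'$ (items 1, 3, 4), a left-to-right level-order traversal produces the per-level indexed arrays (item 2), and a per-starting-position lookup handles item 5. Where you are slightly more precise than the paper is on item 5: the paper introduces, for each level $k$, an array $N_{s,k}$ indexed by position, which --- read literally --- is quadratic in size; your observation that the explicit nodes starting at a fixed position $i$ form a chain of leftmost children ending at the singleton $[i:i]$ and its transformation-added copy, and hence occupy a \emph{contiguous} interval of levels $[k_1(i):k_2(i)]$, is exactly what is needed to flatten these tables into $O(n+n')$ space while keeping $O(1)$-time lookup, and you also spell out the behaviour of the query on levels below $k_2(i)$ (return the bottom explicit singleton) and above $k_1(i)$ (return none), which the paper leaves implicit.
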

\begin{proof}
	We can directly use \cref{def:pcon} to compute the P-Connection on levels.
	The roots of the trees are P-connected.
	Then, we start the traversal of the trees and,
	when considering a pair of P-connected nodes,
	we connect their respective $i^{th}$ children, for all $i$.
	The only aspect that needs to be treated carefully is that each time we place a pair of explicit nodes $([i:i], [j:j])$ in the P-Connection,
	we can set $U[i]=j$ and $U'[j]=i$,
	and note that the (implicit or explicit) nodes $[i:i]$ and $[j:j]$ will be P-connected on all lower levels.
	This solves items $1$, $3$, $4$ of the list in the statement.
	For simplicity, we can also assume that the P-Connection is implemented as a series of arrays $P_k$ and $P'_k$
	(where $k$ is a level of $T_s$)
	such that  $P_k[j]=j'$ and $P'_k[j']=j$
	if and only if the $j^{th}$ node on level $k$ of $T_s$
	(from left to right)
	is connected to the $j'^{th}$ node on level $k$ of $T_t$
	(also from left to right). 
	
	The rest of the proof is given for $s$ and $T_s$.
	An analogous approach works for $t$ and $T_t$. 
	
	By traversing the tree $T_s$ on levels (left to right)
	we can also associate to each explicit node of the tree the pair $(k,j)$,
	where $k$ is its level,
	and $j-1$ is how many explicit nodes occur to the left of that node on its level.
	As such, we can construct an array  for each level of the tree, namely $s_k$,
	such that $s_k[j]$ is the explicit node associated to the pair $(k,j)$ from $T_s$.
	For $t$ and $T_t$ we construct the array $t_k$ for each level of the tree. 
	
	This solves item 2 of the above list in linear time.
	
	During the traversal of $T_s$,
	we can also compute for each level $k$ an array~$N_{s,k} $
	such that if $[p:q]$ is a $k$-block of $T_s$,
	then $N_{s,k}[p]$ is a pointer to the node associated to the block $[p:q]$ of $T_s$.
	In $T_t$, we define the arrays $N_{t,k}$,
	where $N_{t,k}[p]$ points to the node associated to $[p:q]$ from $T_t$. 
	
	So, the whole process takes linear time,
	and all the desired information can be retrieved in $O(1)$ using the arrays we constructed.
	The statement follows.
\end{proof}

\begin{restatable}{lemma}{restatePrev}\label{lem:prev}
	Given two words $s$ and $t$, with $|s|=n$ and $|t|=n'$, $n\geq n'$,
	and their Simon-Trees $T_s$ and $T_t$, respectively,
	we can compute in $O(n)$ time all the values:
	\begin{itemize}
		\item $\prevpos(a,x)$ and $\rightpos(a,x)$
		for $a=[m_a:n_a]$ a block in $T_s$ and $x\in \alp(a)\cup\{s[m_a-1]\}$. 
		\item $\prevpos(b,x)$ and $\rightpos(b,x)$
		for $b=[m_b:n_b]$ a block in $T_t$,
		which is $P$-connected to the block $a=[m_a:n_a]$ of $T_s$,
		and $x\in \alp(a)\cup\{s[m_a-1]\}$. 
	\end{itemize}
\end{restatable}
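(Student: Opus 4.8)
The plan is to reduce every value the lemma asks for to a single batch of rightmost-occurrence queries and to answer that batch in linear total time; the only non-routine point will be bounding the size of the batch. For a position $p\ge 0$ and a letter $x$, let $R_s(p,x)$ be the rightmost occurrence of $x$ in $s[1:p]$, with the convention $R_s(p,x)=0$ when there is none (in particular whenever $p\le 0$), and define $R_t$ analogously for $t$. Straight from the definitions, for a block $a=[m_a:n_a]$ of $T_s$ we have $\rightpos(a,x)=R_s(n_a-1,x)$ and $\prevpos(a,x)=R_s(m_a-2,x)$; and for the $P$-connection $b=[m_b:n_b]$ of $a$ in $T_t$ (if it exists), which by \cref{lem:init} we can read off from $a$ in $O(1)$ time after $O(n+n')$ preprocessing, $\rightpos(b,x)=R_t(n_b-1,x)$ and $\prevpos(b,x)=R_t(m_b-2,x)$, for the \emph{same} letter $x$. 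So it suffices to: (i) produce, for every node $a$ of $T_s$, the list of relevant letters $x\in\alp(a)\cup\{s[m_a-1]\}$ together with the constantly many $R_s$- and $R_t$-queries that $a$ and $b$ contribute for each such $x$; and (ii) evaluate all of these queries.

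The batch has only $O(n)$ queries. By \cref{lem:split}, the children of a $k$-node $a=[m_a:n_a]$ of $T_s$ end precisely at the positions that are the rightmost occurrence of their letter inside $a$ (together with the sliced-off singleton $[n_a:n_a]$ when $k>0$); consequently the set of last letters of the children of $a$ equals $\alp(a)$, and in particular $|\alp(a)|\le c(a)+1$, where $c(a)$ is the number of children of $a$. We may therefore enumerate $\alp(a)$ by walking over the children of $a$ in $T_s$ and reading off their last letters, and then append $s[m_a-1]$; when $m_a=1$ we treat $s[0]$ as a fixed sentinel absent from $s$, so all of its queries return $0$. Each relevant letter spawns a constant number of queries, so the batch has $\sum_a O(c(a)+1)$ queries, which is $O(n)$ because $T_s$ has $O(n)$ nodes even after the preliminary transformation (and $n\ge n'$). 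We tag each query with the node at which its answer is to be stored — the query for $x=s[n_c]$ with the child $c$, the query for $x=s[m_a-1]$ with $a$ itself — so that the algorithm of the previous section can later look up any $\prevpos$ or $\rightpos$ value at a node in $O(1)$ time.

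To evaluate the batch, one left-to-right scan of $s$ builds, for every letter $x$, the ascending list $O^s_x$ of positions where $x$ occurs in $s$; we do the same for $t$, obtaining $O^t_x$. Next we counting-sort all the $R_s$-queries by their query position $p$ (which lies in $[0:n]$) and then stably by their letter $x$ (which lies in $[1:n+n']$), so that the queries end up grouped by letter and, within each group, sorted by position. For a fixed $x$, a single two-pointer sweep merges the sorted query positions of $x$ against $O^s_x$: processing the queries in increasing order, we advance a pointer to the largest element of $O^s_x$ not exceeding the current position and output it (or $0$). This costs $O(|O^s_x|)$ plus the number of queries involving $x$; summing over all letters gives $O(n)$, and the $R_t$-queries are dispatched in the same way in $O(n)$ time. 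Writing each answer back to the node it was tagged with finishes the computation in $O(n)$ time and space.

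The main obstacle is exactly this $O(n)$ size bound: a priori $\sum_a|\alp(a)|$ could be as large as $\Theta(n|\Sigma|)$, and it collapses only because \cref{lem:split} identifies the letters of $\alp(a)$ with the last letters of the children of $a$ in $T_s$, so the sum telescopes to the number of nodes of the Simon-Tree. Everything else — the per-letter occurrence lists, the two counting-sort passes, and the two-pointer merges — is routine and plainly linear, and the boundary cases ($m_a\le 2$, and the phantom letter $s[0]$) are absorbed by the $0$-conventions already built into the definitions of $\prevpos$ and $\rightpos$.
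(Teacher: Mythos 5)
Your proof is correct and follows essentially the same approach as the paper's: reduce the computation to an $O(n)$-size batch of offline rightmost-occurrence queries (bounded via \cref{lem:split} by reading $\alp(a)$ off the last letters of $a$'s children, with the $P$-connection looked up via \cref{lem:init}) and answer them in linear time by grouping queries per letter. The paper differs only in cosmetic details: it reads $\rightpos(a,x)$ directly off the tree (the children's ending positions) rather than folding those into the query batch, and it answers the $\prevpos$ queries during a single left-to-right sweep of $s$ while maintaining a last-occurrence array, rather than via your counting-sort plus two-pointer merge against per-letter occurrence lists.
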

\begin{proof}
	The first observation is that $\rightpos(a,x)$ for $a=[m_a:n_a]$ a $k$-block in $T_s$ and $x\in \alp(a)\cup\{s[m_a-1]\}$ can actually be computed by looking at the splitting of the block $a$ into $(k+1)$-blocks.
	By \cref{lem:split},
	each position $\rightpos(a,x)$,
	with $x\in \alp(a)\cup\{s[m_a-1]\}$,
	is a position where one such $(k+1)$-block starts.
	Similarly, we can compute $\rightpos(b,x)$ for $b=[m_b:n_b]$ a $k$-block in $T_t$,
	which is $P$-connected to the block $a=[m_a:n_a]$ of $T_s$,
	and $x\in \alp(a)\cup\{s[m_a-1]\}$.
	Again, these positions are among the positions that split $b$ into $(k+1)$-blocks. 
	
	To efficiently manage the P-Connection,
	we can first compute it using \cref{lem:init}.
	To retrieve $\rightpos(a,x)$ in $O(1)$ time,
	we store it in the child of $a$ representing a block whose last letter is $x$. 
	
	To compute all the values $\prevpos(a,x)$ for all blocks $a$ of $T_s$
	and $x\in \alp(a)\cup\{s[m_a-1]\}$,
	we do the following:
	\begin{enumerate}
		\item We represent the query $\prevpos(a,x)$ for a block $a=[m_a:n_a]$ of $T_s$ and $x\in \alp(a)\cup\{s[m_a-1]\}$ as the triple $(x,m_a,n_a)$.
		For each such triple, we store a pointer to the node $a$. 
		\item We radix-sort the triples $\{(x,m_a,n_a)\mid a=[m_a:n_a]$ is a block of $T_s$ and $x\in \alp(a)\cup\{s[m_a-1]\}\}$
		and obtain a list $L$. 
		\item For each $x\in \Sigma$,
		we select in a list $L_x$ the contiguous part of $L$
		consisting in all the triples of the form $(x,i,i')$. 
		\item We initialize an array $Q$ with $\len\Sigma$ elements,
		where $Q[x]=0$ for all $x\in \Sigma$. 
		\item We now go through the letters $s[j]$ of the word $s$, for $j$ from $1$ to $n$.
		If $s[j]=x$ holds,
		then we remove from the list $L_x$ all the triples $(x,i,i')$ with $i-1 \leq j$;
		for each such triple $(x,i,i')$ and the block $a=[i:i']$,
		we set $\prevpos(a,x)=Q[x]$.
		Then, we set $Q[x]=j$.  
	\end{enumerate}
	It is immediate that the above algorithm computes correctly $\prevpos(a,x)$  for all blocks $a$ of $T_s$ and $x\in \alp(a)\cup\{s[m_a-1]\}$.
	In order to access these values efficiently,
	we store $\prevpos(a,x)$ in the children of $a$ ending with $x$, for $x\in \alp(a)$,
	and as a separate satellite value in the node $a$ for $x=s[m_a-1]$.
	The complexity is clearly linear. 
	
	A similar algorithm can be used to compute the values $\prevpos(b,x)$ for $b=[m_b:n_b]$ a block in $T_t$,
	which is $P$-connected to the block $a=[m_a:n_a]$ of $T_s$,
	and $x\in \alp(a)\cup\{s[m_a-1]\}$. 
\end{proof}

\begin{restatable}{lemma}{restateComputIntervals}\label{comput_intervals}
	Let $a=[m_a:n_a]$ and $b=[m_b:n_b]$ be P-connected blocks of $T_s$ and $T_t$, respectively,
	and $s_a=s[m_a-1:n_a-1]$.
	We can compute in overall $O(|\alp(a)|)$ time the three lists,
	associated to the pair $(a,b)$, containing:
	\begin{enumerate}
		\item the interval-pairs $([\prevpos(a,x)+1: m_a-1],[0: \prevpos(b,x)])$,
		for all $x\in \alp(s_a)$; 
		\item the interval-pairs $([\prevpos(a,x)+1:\rightpos(a,x)],[\rightpos(b,x)+1: n_b])$,
		for all $x\in \alp(s_a)$; 
		\item the interval-pairs $([\prevpos(a,x)+1:\rightpos(a,x)],[\prevpos(b,x)+1: \rightpos(b,x)])$,
		for all $x\in \alp(s_a)$.
	\end{enumerate}
\end{restatable}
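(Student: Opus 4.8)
The plan is to compute the three lists by a single scan of the node $a$ of $T_s$ together with its children, reading off information that \cref{lem:init} and \cref{lem:prev} have already prepared; the lemma is really a packaging step, so I expect no genuine obstacle beyond the bookkeeping described below.

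First I would make the index set $\alp(s_a)$ explicit and bound its size. Since $s_a = s[m_a-1:n_a-1]$, we have $\alp(s_a) = \{s[m_a-1]\} \cup \alp(\Adag)$ with $\Adag = [m_a:n_a-1]$. By \cref{lem:split}, the positions splitting $\Adag$ into $(k+1)$-blocks are exactly the rightmost occurrences within $\Adag$ of the distinct letters of $\Adag$; consequently the children of the node $a$ other than the one representing the singleton $[n_a:n_a]$ are in bijection with $\alp(\Adag)$, a child being matched to its first letter. Hence $a$ has at most $\len{\alp(a)}+1$ children, and a single pass over the child list of $a$ --- costing $O(\len{\alp(a)})$ time --- produces, for each $x \in \alp(\Adag)$, a pointer to the child $a_x$ of $a$ starting with $x$; adjoining $s[m_a-1]$ when $m_a > 1$ completes the enumeration of $\alp(s_a)$, whose size is therefore $O(\len{\alp(a)})$. (When $a$ is a singleton, $\alp(\Adag)=\emptyset$ and $\alp(s_a)\subseteq\{s[m_a-1]\}$, so this is trivial.)

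Second, I retrieve in $O(1)$ time the $P$-connection $b$ of $a$ via \cref{lem:init}, and I use that $\alp(s_a) \subseteq \alp(a)\cup\{s[m_a-1]\}$ is precisely the index set covered by \cref{lem:prev}: that lemma has already computed $\prevpos(a,x)$, $\rightpos(a,x)$, $\prevpos(b,x)$ and $\rightpos(b,x)$ for every such $x$ and stored them as satellite data of the nodes $a$, $b$ and their children, so that --- while scanning the children of $a$ and, through the node-to-node $P$-connection, the matching children of $b$ --- all four values for each $x\in\alp(s_a)$ are delivered in constant time. Continuing the scan above, I append for each $x\in\alp(s_a)$, in $O(1)$ time, the three interval-pairs listed as items 1, 2 and 3 of the statement to the first, second and third output list respectively. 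Since each letter of $\alp(s_a)$ costs $O(1)$ and $\len{\alp(s_a)} = O(\len{\alp(a)})$, the whole computation runs in $O(\len{\alp(a)})$ time.

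The one point that needs care, and which I consider the main (minor) obstacle, is that a letter $x\in\alp(s_a)$ need not belong to $\alp(b)$ --- this already happens for $x=s[m_a-1]$, and more generally for a letter of $\Adag$ absent from the truncated block of $b$ --- so the two $b$-side values cannot simply be located at a child of $b$ ``labelled $x$''. This is exactly what the storage scheme in the proof of \cref{lem:prev} accommodates (such values are kept in dedicated fields co-enumerated with the scan of $a$'s children), so here it only remains to read them. The residual corner cases are routine: if $m_a=1$ there is no position $m_a-1$, we drop it, and the first list is empty because $[\prevpos(a,x)+1:m_a-1]$ is empty for every remaining $x$; and the singleton blocks that the preliminary transformation duplicates over two consecutive levels behave exactly as above, their transformation-added child carrying no letter of $\Adag$. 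Everything combinatorial here is inherited from \cref{lem:split} and \cref{lem:prev}, so the argument is just the amortization of four constant-time lookups over the $O(\len{\alp(a)})$ letters of $\alp(s_a)$.
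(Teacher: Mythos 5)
Your proposal is correct and takes essentially the same approach as the paper, which simply invokes \cref{lem:prev} to have the four $\prevpos/\rightpos$ values available and then assembles each interval-pair in $O(1)$ time, summing to $O(|\alp(a)|)$. Your write-up is merely more detailed about how $\alp(s_a)$ is enumerated from the children of $a$ and about the corner cases ($m_a=1$, singleton blocks, letters absent from $b$'s truncated block), all of which are consistent with the storage scheme the paper sets up in \cref{lem:prev}.
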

\begin{proof}
	We first run the algorithm of \cref{lem:prev}
	and then the required interval-pairs can be clearly computed in $O(1)$ time per pair.
	This adds up to $O(|\alp(a)|)$ time for a pair of P-connected nodes $(a,b)$. 
\end{proof}

\begin{restatable}{lemma}{restateLevel}\label{lem:level1}
	Given two words $s$ and $t$, with $|s|=n$ and $|t|=n'$, $n\geq n'$, and their Simon-Trees $T_s$ and $T_t$,
	we can check in $O(n)$ overall time for all pairs of P-connected $1$-blocks $(a,b)$,
	with $a=[m_a:n_a]$ and $b=[m_b:n_b]$,
	whether $\alp(s[n_a:n])=\alp(t[n_b:n'])$. 
\end{restatable}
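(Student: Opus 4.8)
The plan is to exploit the very rigid structure of level $1$ of a Simon-Tree. Recall that the $1$-blocks of a word $w$ are obtained by cutting the $0$-block $[1:|w|]$ exactly at the rightmost occurrences of the distinct letters of $w$, scanned from right to left; hence there are exactly $|\alp(w)|$ of them, and, when listed as the children of the root in the Simon-Tree order (decreasing by starting, hence by ending, position), the $i^{th}$ child $c_i=[m_i:n_i]$ satisfies $\alp(w[n_i:|w|])=\{w[n_1],w[n_2],\dots,w[n_i]\}$, a set of exactly $i$ distinct letters; these sets are nested and grow by one letter as $i$ increases. A pair of P-connected $1$-blocks is precisely $(a_i,b_i)$, the $i^{th}$ child of the root of $T_s$ together with the $i^{th}$ child of the root of $T_t$, for $i$ ranging over $1,\dots,\min(|\alp(s)|,|\alp(t)|)$. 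So the task reduces to deciding, for each such $i$, whether $\{s[n_{a_1}],\dots,s[n_{a_i}]\}=\{t[n_{b_1}],\dots,t[n_{b_i}]\}$.

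Next I would turn this into an interval-covering problem. Write $\sigma_s[i]=s[n_{a_i}]$ and $\sigma_t[i]=t[n_{b_i}]$; each of these two sequences lists $\alp(s)$, respectively $\alp(t)$, without repetition. For a letter $x$, let $p_s(x)$ be the unique index with $\sigma_s[p_s(x)]=x$, and $p_s(x)=\infty$ if $x\notin\alp(s)$; define $p_t(x)$ analogously. Since $x\in\alp(s[n_{a_i}:n])$ holds exactly when $p_s(x)\le i$, the two $i$-element sets above coincide iff for every letter $x$ we have $p_s(x)\le i \iff p_t(x)\le i$; equivalently, they \emph{differ} iff $i$ lies in the integer interval $[\min(p_s(x),p_t(x)) : \max(p_s(x),p_t(x))-1]$ for some letter $x$ with $p_s(x)\neq p_t(x)$.

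The algorithm is then: read $\sigma_s$ and $\sigma_t$ directly off the roots' children in $T_s$ and $T_t$; invert them into arrays $p_s,p_t$ indexed over the integer alphabet $\{1,\dots,n+n'\}$ (initialising the entries to $\infty$); maintain a difference array over $[1:\min(|\alp(s)|,|\alp(t)|)]$ into which, for each letter $x$ with $p_s(x)\neq p_t(x)$, we add the forbidden interval above; take a prefix sum; finally report, for the $i^{th}$ P-connected pair of $1$-blocks, that $\alp(s[n_{a_i}:n])=\alp(t[n_{b_i}:n'])$ precisely when that prefix sum is $0$ at position $i$. Every step runs in $O(n+n')=O(n)$ time, which gives the claim. (An equivalent variant avoids the difference array by scanning $i$ from $1$ upwards while maintaining membership arrays for the two growing letter-sets and the size of their symmetric difference, reporting equality exactly when this size is $0$.)

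The one genuinely delicate point, and what I expect to be the main obstacle, is resisting a purely local test: it is \emph{not} true that once $\alp(s[n_{a_i}:n])$ and $\alp(t[n_{b_i}:n'])$ disagree they keep disagreeing. For example, for $s=abc$ one gets $\sigma_s=(c,b,a)$ and for $t=acb$ one gets $\sigma_t=(b,c,a)$, so $\sigma_s[i]\neq\sigma_t[i]$ for every $i$, yet the prefix-sets agree at $i=2$ and $i=3$. Hence one cannot merely compare $\sigma_s[i]$ with $\sigma_t[i]$ letter by letter; one must track the discrepancy between the two letter-sets globally, which is exactly what the interval/difference-array bookkeeping (or the symmetric-difference counter) above accomplishes.
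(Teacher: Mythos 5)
Your proposal is correct and, especially via the parenthetical ``equivalent variant,'' matches the paper's own proof: the paper scans the P-connected $1$-block pairs in order while maintaining an array $H$ over the alphabet recording, for each letter, whether it lies in neither, exactly one, or both of the growing sets $\alp(s[n_{a_i}:n])$ and $\alp(t[n_{b_i}:n'])$, together with a counter $check$ of the size of the symmetric difference, declaring equality exactly when $check=0$. Your observation that a purely local test comparing $\sigma_s[i]$ with $\sigma_t[i]$ would be wrong (illustrated by $s=abc$, $t=acb$) correctly pinpoints why one must track the symmetric difference globally, which is precisely what the paper's $H$/$check$ bookkeeping does.
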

\begin{proof}
	Let $\sigma=|\Sigma|$ be the size of the input alphabet.
	Clearly, we have $\sigma\in O(n)$
	and we can assume that $\Sigma = [1:\sigma]$. 
	
	Let $g$ be the number of $1$-nodes in $T_s$
	(recall that these nodes are numbered from right to left),
	and $g'$ be the number of $1$-nodes of $T_t$.
	The partition of $s$ (resp., of $t$) into $1$-blocks is done according to \cref{lem:split},
	and it can be retrieved from the first level of the tree $T_s$ (and $T_t$ respectively).
	Indeed, we split the interval $[1:n]$ into the intervals/blocks corresponding to the children of the $0$-node of $T_s$.
	We also apply these splits in the split-find structure ${\mathcal S}$. 
	
	Let us see now how to synchronise the blocks of the two trees.
	We use an array $H$ with $\sigma$ elements,
	initially all set to $0$.
	As an invariant,
	when processing the $i^{th}$ block $a=[m_a:n_a]$ from $T_s$
	and the $i^{th}$ block $b=[m_b:n_b]$ from $T_t$ we have that,
	for each letter $x\in \Sigma$: 
	\begin{itemize}
		\item $H[x]=0$ if $x \notin \alp(s[n_a:n]) \cup  \alp(t[n_b:n'])$,  
		\item $H[x]=1$ if $x \in \alp(s[n_a:n]) \setminus  \alp(t[n_b:n'])$,  
		\item $H[x]=2$ if $x \in \alp(s[n_a:n]) \cap  \alp(t[n_b:n'])$,  
		\item $H[x]=3$ if $x \in  \alp(t[n_b:n']) \setminus \alp(s[n_a:n])$.
	\end{itemize}
	Also, we maintain a variable $check$ that counts how many odd values are in $H$.
	Clearly, $\alp(s[n_a:n])=\alp(t[n_b:n'])$
	if and only if $check$ is $0$.
	
	We maintain $check$ and $H$ as follows.
	For $i$ from $1$ to $g$,
	assume that the $i^{th}$ node on level $1$ of $T_s$ is $a=[m_a:n_a]$,
	and the $i^{th}$ node on level $1$ of $T_t$ is $b=[m_b:n_b]$. 
	If $H[s[n_a]]=0$,
	then set $H[s[n_a]]\gets 1$,
	and $check$ is increased by $1$.
	If $H[s[n_a]]=3$,
	then set $H[s[n_a]]\gets 2$,
	and $check$ is decreased by $1$.
	If $H[t[n_b]]=0$,
	then set $H[t[n_b]]\gets 3$,
	and $check$ is increased by $1$.
	If $H[t[n_b]]=1$,
	then set $H[t[n_b]]\gets 2$,
	and $check$ is decreased by $1$.
	In all other cases we do nothing.
	After this, if $check$ is odd,
	then $\alp(s[n_a:n])\neq \alp(t[n_b:n'])$.
	Otherwise, $\alp(s[n_a:n])=\alp(t[n_b:n'])$.
	We then consider the next value of $i$.
	
	The algorithm described above can be clearly implemented in $O(n+n')$ time.
\end{proof}

\section{Efficiently constructing the S-Connection and solving \maxkproblem.}
\label{sec:s-connection-algo}
Based on the previous lemmas,
we can now show our main technical theorem.
We use \cref{lem:level1} to see which $1$-nodes are not S-connected.
This is done in $O(n)$ time.
Then consider the $k$-nodes,
for each $k\geq 2$ in increasing order.
For each pair $(a,b)$ of $(k-1)$-nodes which were split
(i.e., removed from the S-Connection)
in the previous step,
we split the pairs of $k$-nodes meeting one of the interval-pairs of the three lists of $(a,b)$,
as computed in \cref{comput_intervals}.
To do this efficiently,
we maintain an interval union-find and an interval split-find structure for each word.

While the concrete algorithm can be found in the following \cref{sec:s-connection-algo-descr} of this paper,
the main result is stated in \cref{sec:main-result}.\looseness=-1

\subsection{S-Connection construction algorithm}
\label{sec:s-connection-algo-descr}
Let $\sigma=|\Sigma|$ be the size of the input alphabet.
Clearly, we have $\sigma\in O(n)$,
and we can assume that $\Sigma = [1:\sigma]$.
We first present our algorithm and then show that it fulfils the desired properties.

\subparagraph*{Data structures and preprocessing.}
We maintain an interval union-find data structure $\mathcal{U}$
and two interval split-find data structures:
${\mathcal S}_s$ over the universe $[1:n]$ and ${\mathcal S}_t$ over the universe $[1:n']$.
Initially, $\mathcal{U}$ contains all the separate intervals $[i:i]$ for $i\in [1:n]$,
while ${\mathcal S}_s$ (respectively, ${\mathcal S}_t$) contains just the interval $[1:n]$ (the interval $[1:n']$).
We assume that the $\find_{\mathcal U}(x)$
(respectively, $\find_{{\mathcal S}_s}(x)$ and $\find_{{\mathcal S}_t}(x)$)
operation returns the borders of the interval of ${\mathcal U}$
(respectively, ${\mathcal S_s}$ and ${\mathcal S}_t$)
which contains $x$.
We will use ${\mathcal U}$ to keep track of the positions of $s$
which were split from the positions to which they are P-connected,
while ${\mathcal S}_s$ and ${\mathcal S}_t$ are used to maintain the splitting of $s$
and, respectively, $t$ into $k$-blocks while $k$ is incremented during the algorithm. 

We also maintain an array with $n$ components $Level_s$ and an array with $n'$ components $Level_t$.
Initially, all components are set to $\infty$.
At the end of the computation,
these arrays will store,
for each position $i$ of $s$ (respectively, of $t$),
the highest level on which $i$ is contained in an implicit or explicit block of $s$
(respectively, $t$)
which is not S-connected to any node of $T_t$
(respectively, $T_s$).

In this initial phase,
we compute the Simon-Trees $T_s$ and $T_t$ using the algorithm from \cref{sec:simon-tree-construction}.
We also compute the data structures in \cref{lem:init,lem:prev}.
Finally, we can also compute the lists of interval-pairs from \cref{comput_intervals}
(stored, e.g., as three separate lists of interval-pairs)
for each pair of P-connected nodes.

\subparagraph*{The main algorithm.}
Now we move on to the main phase of the algorithm.
The $0$-nodes of the two trees are clearly S-connected.
In the data structures ${\mathcal S}_s$ and ${\mathcal S}_t$
we split the respective $0$-blocks into the corresponding $1$-blocks
(these are given by their children in $T_s$ and $T_t$).

{\em \S\ The first step} of our algorithm is {\em computing the S-Connection between $1$-nodes}. 

For $k=1$, we process $s$, $t$, and their corresponding Simon-Trees according to \cref{lem:level1}. 

Now, we 1-split the pair $a=[m_a:n_a]$ and $b=[m_b:n_b]$
if and only if $\alp(s[n_a:n])\neq \alp(t[n_b:n'])$.
Further, for all $\ell\in [m_a:n_a]$,
we set $Level_s[\ell]=1$,
as $1$ is the level on which position $\ell$ was split from its P-Connection.
Similarly, for $\ell' \in  [m_b:n_b]$
we set $Level_t[\ell']=1$.
We also update the union-find structure ${\mathcal U}$.
First, we make the union of the singletons $[i:i]$,
for $i$ from $m_a$ to $n_a$.
Then, if $Level_s[m_a-1]\neq \infty$,
we make the union between the interval that contains $m_a-1$
(returned by $\find_S(m_a-1)$)
and the interval that contains $m_a$ (namely, $[m_a:n_a]$).
Further, if $Level_s[n_a+1]\neq \infty$,
then we make the union between the interval that contains $m_a$
(returned by $\find_S(m_a)$)
and the interval that contains $n_a+1$
(returned by $\find_S(n_a+1)$).
In this way, we ensure that each interval of consecutive positions $i$ of $s$,
for which $Level_s[i]\neq \infty$ and which cannot be extended to the left nor to the right,
corresponds to a single interval stored in the interval union-find structure ${\mathcal U}$. 

If $c=[m_c:n_c]$ is a $1$-block of $s$ (or, respectively, $t$)
which was not P-connected to a block of $s$ (respectively, $t$),
then, for $\ell' \in  [m_c:n_c]$,
we set $Level_s[\ell']=1$ (respectively, $Level_t[\ell']=1$).

At the end of this step
we collect in a list $L_1$ the positions $i$ for which the $1$-blocks $[i:i]$ and $[U[i]:U[i]]$ were $1$-split
(so, the pairs of singleton-$1$-blocks which were split).
Finally, we split the intervals $[1:n]$ (and $[1:n']$) in the data structures ${\mathcal S}_s$ and ${\mathcal S}_t$ into the corresponding $2$-blocks
(i.e., each $1$-block is split into its children). 

We note that at the end of this step
we will have that $Level_s[i]\neq \infty $
if and only if the position $i$ of $s$ was contained in a $1$-block of $T_s$
which was split from the block of $t$ to which it is P-connected
(that is, the respective pair of blocks is not in the S-Connection).
Clearly, for any $2$-block $a=[m_a:n_a]$ of $s$,
a position $i \in [m_a:n_a]$ fulfils $Level_s[i]\neq \infty$
if and only if all positions $j\in [m_a:n_a]$ fulfil $Level_s[j]\neq \infty$.  

The pairs of P-connected $1$-blocks which were not split, are S-connected.

{\em \S\  The iterated step} of our algorithm is {\em computing the S-Connection between $k+1$-nodes}, for $k$ from $1$ to $d-1$,
where $d$ is the last level of $T_s$. 

We can assume that we have the list $L_k$ containing the pairs of (explicit and implicit) singleton-$k$-blocks
which were $k$-split (that is, split in the previous iteration),
as well as the list $H_k$ of explicit nodes on level $k$ of $T_s$
paired with the nodes from level $k$ of $T_t$ to which they are P-connected. 

So let $(a,b)$ be a pair of nodes from $L_k \cup H_k$.
That is, $a=[m_a:n_a]$ is a $k$-block from $T_s$ (explicit or implicit),
$b=[m_b:n_b]$ is a $k$-block from $T_t$,
and $a$ and $b$ are P-connected.
Using \cref{comput_intervals}, in the preprocessing phase,
we have computed the following three lists of pairs of intervals:
\begin{enumerate}
	\item[1.] For each letter $x$ occurring in $s[m_a-1:n_a-1]$,
	the interval-pair $([\prevpos(a,x)+1: m_a-1],[0: \prevpos(b,x)])$.
	\item[2.] For each letter $x$ occurring in $s[m_a-1:n_a-1]$,
	the interval-pair $([\prevpos(a,x)+1:\rightpos(a,x)],[\rightpos(b,x)+1: n_b])$. 
	\item[3.] For each letter $x$ occurring in $s[m_a-1:n_a-1]$, 
	the interval-pair $([\prevpos(a,x)+1:\rightpos(a,x)],[\prevpos(b,x)+1: \rightpos(b,x)])$. 
\end{enumerate}

Let $([e_1:e_2],[f_1:f_2])$ be an interval-pair contained in one of the first two lists.
We should $(a,k+1)$-split all the pairs of P-connected $(k+1)$-blocks $(a',b')$ (explicit or implicit)
such that $a'$ ends inside $[e_1:e_2]$ and $b'$ inside $[f_1:f_2]$.
For this, we will use our additional data structures ${\mathcal U}$ and ${\mathcal S}$. 

This is done as follows.
Using $\find_{{\mathcal S}_s}(e_1)$ and $\find_{{\mathcal S}_t}(f_1)$,
we compute the $(k+1)$-block $[p,q]$ of $s$ that contains $e_1$
and, respectively, the $(k+1)$-block $[p',q']$ of $t$ that contains $f_1$.
Let $[r:o]$ be the $(k+1)$-block of $s$
which is P-connected to $[p',q']$.
We can obtain $[r:o]$ according to \cref{lem:init},
which we already used in the preprocessing phase.
If $r\geq p$,
then we set $p\gets r$ and $q\gets o$
to ensure that $[p:q]$ is the leftmost $(k+1)$-block of $s$ that ends to the right or on $e_1$,
which is P-connected to a block which ends to right or on $f_1$. 

Further, if $Level_s[p]\neq \infty$,
then we use $\find_{\mathcal U}(s)$ to identify the interval $[d_1:d_2]$ of positions of $s$
which contains $p$ and has the property
that if $i\in [d_1:d_2]$,
then $Level_s[i]\neq \infty$.
We then set $p\gets d_2+1$ and $q$ as the right border of the interval $\find_S(p)$
(i.e., the right border of the $(k+1)$-block starting with $p$).
Now, we have $Level_s[p]=\infty$,
and we can continue as described below. 

The main part of the processing we do for $k+1$ consists in the following {\em loop}. 

If $Level_s[p]\neq \infty$,
then we check whether the $(k+1)$-block $[p:q]$ ends in $[e_1:e_2]$.
If yes, we compute the block $[p':q']$ to which it is P-connected
and see if it ends inside $[f_1:f_2]$.
If any of the previous checks is false,
then we stop the execution of the {\em loop}.
If both checks are true,
then we decide that $[p:q]$ and $[p':q']$ are not S-connected
(the reason for this being that $s[\nextpos([p:q],x)+1:n]\nsim_k t[\nextpos([p':q'],x)+1:n']$).
Thus, we set $Level_s[i]=k+1$ for all $i\in [p:q]$ and $Level_t[i]=k+1$ for all $i\in [p':q']$.
We also make in ${\mathcal U}$ the union of all the intervals $[i:i]$ with $i\in [p:q]$.
Then we make the union of the interval $[p:q]$ and the interval of $p-1$
if $Level_s[p-1]\neq \infty$,
and the union of the interval of $q$ and the interval of $q+1$
if $Level_s[q+1]\neq \infty$. 

Then, we consider the $(k+1)$-block starting with $p+1$,
returned by $\find_{\mathcal S}(p+1)$,
and repeat the {\em loop} for this block in the role of $[p:q]$.

The {\em loop} stops when we reached a $(k+1)$-block $[p:q]$
that ends outside of $[e_1:e_2]$
or it is P-connected to a block $[p':q']$
that ends outside $[f_1:f_2]$.
In both cases, we consider a new interval-pair $([e_1:e_2],[f_1:f_2])$ from our lists
and repeat the same process described above.

Moving on to the third list,
we will process it exactly as the first two lists,
but only in the case when it corresponds to a pair $(a,b)$ of $k$-blocks
that are not S-connected. 

Finally, if $a=[m_a:n_a]$ is a block of $s$ (respectively, of $t$)
which did not have a pair in the P-Connection (due to \cref{lem:pcon}),
we set $Level_s[i]=k+1$ (respectively, $Level_t[i]=k+1$) for all $i \in [m_a:n_a]$.
If $a$ is a block of $s$,
we also make in ${\mathcal U}$ the union of all the intervals $[i:i]$ with $i\in [m_a:n_a]$.
Then we make the union of the interval $[m_a:n_a]$ and the interval of $m_a-1$
if $Level_s[m_a-1]\neq \infty$,
and the union of the interval of $n_a$ and the interval of $n_a+1$
if $Level_s[n_a+1]\neq \infty$.
Once more, we ensure that each interval of consecutive positions $i$ of $s$,
for which $Level_s[i]\neq \infty$
and which cannot be extended to the left nor to the right,
corresponds to a single interval stored in the interval union-find structure ${\mathcal U}$. 

Similarly to {\em the first step}, at the end of this step,
we collect in a list $L_{k+1}$ the positions $i$ for which the P-Connection between the $(k+1)$-blocks $[i:i]$ and $[U[i]:U[i]]$ were split,
and we split the non-singleton $k$-blocks of $s$ and $t$,
and the corresponding intervals of ${\mathcal S}_s$ and ${\mathcal S}_t$,
into their $(k+1)$-blocks
(i.e., each $k$-block is split into its children). 

As an invariant, at the end of the execution of the iteration for $k$,
we will have that $Level_s[i]=j\neq \infty $
if and only if the position $i$ of $s$ was contained in an explicit or implicit $(k+1)$-block of $T_s$
which was split in the iteration $j-1$ from the block of $T_t$ to which it is P-connected. Clearly, for any $g$-block $a=[m_a:n_a]$ of $s$, with $g> k+1$,
a position $i \in [m_a:n_a]$ fulfils $Level_s[i]\neq \infty$
if and only if all positions $j\in [m_a:n_a]$ fulfil $Level_s[j]\neq \infty$. 

The pairs of P-connected $(k+1)$-blocks (implicit or explicit)
which were not split,
are S-connected.

\subsection{Main result}
\label{sec:main-result}
After the complete algorithm description,
we can now state our main result in \cref{thm:main}.\looseness=-1

\begin{restatable}{theorem}{restateMain}\label{thm:main}
	Given two words $s$ and $t$, with $|s|=n$ and $|t|=n'$, $n\geq n'$,
	we can compute in $O(n)$ time the following:
	\begin{itemize}
		\item the S-Connection between the nodes of the two trees $T_s$ and $T_t$;
		\item for each $i\in [1:n]$,
		the highest level $k$ on which the (implicit or explicit) node $[i:i]$ is $k$-split from its P-Connection. 
	\end{itemize}
\end{restatable}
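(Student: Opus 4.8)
The statement is established by the algorithm of \cref{sec:s-connection-algo-descr}; the plan is to prove that algorithm correct and then to show that it runs in $O(n)$ time.

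\emph{Correctness.} I would argue by induction on the level $k$, maintaining the invariant that, once level $k$ has been fully processed, we know exactly which pairs of P-connected $k$-nodes are S-connected and, moreover: the structures ${\mathcal S}_s,{\mathcal S}_t$ store the refinements of $s,t$ into blocks of the level to be treated next; ${\mathcal U}$ groups the positions of $s$ into the maximal runs on which $Level_s$ is already finite; for every position $i$ of $s$, $Level_s[i]$ equals the level $j\le k$ at which the $j$-block containing $i$ was first separated from its P-Connection (and $\infty$ if no such $j\le k$ exists), and symmetrically for $t$; and the lists $L_k$, $H_k$ contain exactly the $k$-split pairs of singleton-$k$-blocks, respectively the explicit $k$-nodes of $T_s$ together with their P-Connections. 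The base case $k=1$ combines \cref{lem:pcon}, which forces every S-connected pair to be P-connected, with \cref{lem:level1}: the $i$-th $1$-node $a$ of $T_s$ is S-connected to the $i$-th $1$-node $b$ of $T_t$ precisely when $\alp(s[n_a:n])=\alp(t[n_b:n'])$, every other P-connected pair of $1$-blocks is $1$-split, and the updates of $Level$, of ${\mathcal U}$, and the refinement into $2$-blocks are performed as prescribed.

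For the inductive step, assume the invariant after level $k$. By \cref{lem:pcon} an S-connected pair of $(k+1)$-blocks must be P-connected with S-connected parents, and pairs whose parents are already separated need no further work. For a P-connected pair $(a',b')$ with S-connected parents, \cref{lem:2words-block-equivalence} says that $a'\nsim_{k+1}b'$ iff some letter $x$ gives $s[\nextpos(a',x)+1:n]\nsim_k t[\nextpos(b',x)+1:n']$, i.e.\ the positions $i=\nextpos(a',x)+1$ and $j=\nextpos(b',x)+1$ lie in $k$-blocks that are not S-connected but whose $(k-1)$-ancestors are. \cref{lem:aksplit} then characterises, for each fixed $k$-block $a$ of $s$ with P-Connection $b$, the pairs that are $(a,k+1)$-split as exactly those meeting one of the three interval-pair families of \cref{comput_intervals}; and the case analysis preceding \cref{lem:aksplit} shows that the witnessing position $i$ of any level-$(k+1)$ split lies either in an explicit $k$-node of $T_s$ (so $a\in H_k$ and all three cases may arise) or in a singleton-$k$-block that was $k$-split (so $a\in L_k$ and only the third case arises). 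Hence processing every pair of $L_k\cup H_k$ — the first two families for pairs from $H_k$, the third family whenever $a\nsim_k b$ — produces precisely the level-$(k+1)$ splits. For the implementation, for a fixed interval-pair $([e_1:e_2],[f_1:f_2])$ the $(k+1)$-blocks of $s$ that meet it are consecutive in $T_s$, so one enumerates them left to right by successive $\find_{{\mathcal S}_s}$ queries while using $\find_{\mathcal U}$ to jump past maximal runs of already-separated positions (so a block is never separated twice); for each candidate the P-Connection of \cref{lem:init} together with $\find_{{\mathcal S}_t}$ decides whether the partner ends in $[f_1:f_2]$, and on a split we set $Level$ to $k+1$ on both blocks, merge their positions in ${\mathcal U}$, and merge with the adjacent runs where needed. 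Reading off $L_{k+1}$ from the newly split singleton pairs and $H_{k+1}$ from the $(k+1)$-st level of $T_s$, and refining the non-singleton $k$-blocks into their children in ${\mathcal S}_s,{\mathcal S}_t$, re-establishes the invariant; after the last level $d$ of $T_s$ every surviving P-connected pair is S-connected, which yields both conclusions of the theorem.

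\emph{Running time.} The Simon-Trees cost $O(n)$ by \cref{oneWord}, and the preprocessing of \cref{lem:init,lem:prev} together with the interval-pair lists of \cref{comput_intervals} costs $O(n)$: the lists attached to a block $a$ have size $O(1+|\alp(s[m_a-1:n_a-1])|)$, which is $O(1+\#\text{children of }a)$, and this sums to $O(n)$ over $\bigcup_k H_k$ (all explicit nodes of $T_s$) and over $\bigcup_k L_k$ (each position of $s$ occurs in at most one $L_k$). In ${\mathcal S}_s$ and ${\mathcal S}_t$ the only splits performed are those of each non-singleton block into its children, one per internal node, hence $O(n)$ split operations; in ${\mathcal U}$ each position is absorbed into a strictly larger interval at most once — when $Level_s$ first becomes finite there — plus $O(1)$ neighbour-merges per separated block, hence $O(n)$ union operations. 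Finally, each pass of the inner loop either performs a split, charged to a distinct newly separated block, or stops because the current block or its partner ends outside the interval-pair, which happens $O(1)$ times per interval-pair; this, with the $O(1)$ skip-queries per interval-pair, gives $O(n)$ $\find$ operations. By \cref{union-find} all three structures run in $O(n)$ time and space, the remaining bookkeeping is clearly linear, and the algorithm runs in $O(n)$ time overall.

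\emph{Main obstacle.} The delicate point is the claim used in the inductive step that processing only the pairs of $H_k\cup L_k$ suffices: one must check that a separation triggered by some witnessing pair $(i,j)$ at level $k+1$ also accounts, at all higher levels, for the blocks it would separate there — this works because once a pair of nodes is separated all their descendants inherit the separation (their positions already carry a finite $Level$ value) and because the interval-pair families of \cref{lem:aksplit}, in particular the third one applied to singletons, already capture every block separated at level $k+1$. Pinning down this propagation argument, together with the amortised bound showing that the ${\mathcal U}$-based skipping keeps the number of $\find$ queries linear, is where the genuine effort lies; everything else amounts to routine manipulations of the data structures from \cref{lem:init,lem:prev,comput_intervals,lem:level1}.
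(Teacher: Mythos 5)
Your proof is correct and follows the same approach as the paper's own proof of \cref{thm:main}: correctness is derived level-by-level from \cref{lem:pcon,lem:2words-block-equivalence,lem:aksplit,lem:level1}, and the running time is obtained by the amortised analysis of the split-find and union-find structures (\cref{union-find}), charging work to positions whose $Level$-entry becomes finite and to the interval-pairs produced by \cref{comput_intervals}. Your write-up is in fact considerably more explicit than the paper's terse argument — in particular the inductive invariant after processing level $k$, the case split between $H_k$ and $L_k$, and the charging of the $\find_{\mathcal U}$ skip-queries (where, to be fully precise, the number of skips per interval-pair need not be $O(1)$, but each skip beyond the first is chargeable to the split that preceded it, so the total remains $O(n)$) — and you correctly single out the two points the paper leaves largely implicit, namely that processing only $L_k\cup H_k$ suffices and that the ${\mathcal U}$-based skipping keeps the $\find$-count linear.
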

\begin{proof}
	\begin{description}
		\item[The output of the algorithm and its correctness.]
		Our algorithm from \cref{sec:s-connection-algo-descr} outputs the S-Connection we computed
		as well as the arrays $Level_s$ and $Level_t$;
		these store for each position $i$ (of $s$ or $t$) the highest level on which the position $i$ is part of a block in its respective tree
		which is not S-connected to any node of the other tree.
		The correctness of our algorithm follows from \cref{lem:2words-block-equivalence,lem:pcon,lem:aksplit}:
		we split all pairs of P-connected nodes which are not S-connected.
		Moreover, we do this considering the nodes of the trees in increasing order of their levels,
		which proves that the arrays $Level_s$ and $Level_t$ are correctly computed.
		\item[The complexity of the algorithm.]
		According to \cref{union-find},
		we can assume that constructing the data structures ${\mathcal U}$ and ${\mathcal S}$ takes linear time,
		and the time needed to execute all the operation on these structures is linear in their number,
		i.e., each operation takes $O(1)$ amortized time.
		Therefore, according to \cref{lem:init,lem:prev,comput_intervals},
		the {\em preprocessing phase} takes linear time.
		In particular, by \cref{comput_intervals},
		we need $O(|\alp(a)|)$ time to compute the interval-pair lists for two P-connected nodes $a$ and $b$;
		this is proportional to the number of children of $a$,
		so summing this up over all pairs results in a time proportional to the number of nodes of $T_s$, so $O(n)$ time.
		The running time of both the {\em first step} and each of the iterations of the {\em iterated step} is linear in the sum of the number of positions $i$ of $s$ and $t$
		for which we set $Level_s[i]$ (respectively, $Level_t[i]$) to a value different from $\infty$ and the total number of elements contained in the three lists for the nodes of the lists $L_k$ and $H_k$, for all $k$.
		Because for each position $i$ we change the value of $Level_s[i]$ (or $Level_t[i]$) exactly once from $\infty$ to some $k$, 
		then the number of positions $i$ of $s$ and $t$,
		for which we set $Level_s[i]$ (respectively, $Level_t[i]$) to a value different from $\infty$, is $O(n+n')$.
		Then, for each pair $(a,b)$ of $H_k$,
		the total number of interval-pairs in the three lists associated to $(a,b)$
		is linear in the number of children of the node $a$.
		Summing up for all explicit nodes $a$ of $T_s$
		the total number of interval-pairs occurring in the lists associated to $(a,b)$
		(where $a$ and $b$ are P-connected)
		is clearly linear in the number of nodes of $T_s$
		(as each child is counted once).
		Then, for each pair $(a,b)$ of $L_k$,
		the total number of interval-pairs in the three lists is $O(1)$.
		Thus, the total number of elements contained in the three lists for the nodes of the lists $L_k$ and $H_k$, for all $k$, is $O(n)$.
		
		We can now conclude that the overall time needed to perform the {\em first step} and all {\em iterated steps} is $O(n+n')$.
		Thus, the whole algorithm runs in linear time,
		so the statement of the theorem is correct.\looseness=-1
	\end{description}
\end{proof}

Finally, in order to solve \maxkproblem,
we need to compute the largest $k$
for which the $k$-block $a=[1:n_a]$ of $s$ is S-connected to the $k$-block $b=[1:n_b]$ of $t$.
Thus, we execute the algorithm of \cref{sec:s-connection-algo}
and the aforementioned level $k$ can be easily found by checking,
level by level,
the blocks that contain position $1$ of $s$ on each level of $T_s$
and the block to which they are S-connected in $T_t$.
As a consequence of \cref{thm:main},
we can now show our main result.\looseness=-1

\begin{restatable}{theorem}{restateMainOne}\label{thm:main1}
	Given two words $s$ and $t$, with $|s|=n$ and $|t|=n'$, $n\geq n'$,
	we can solve \maxkproblem\ and compute a distinguishing word of minimum length for $s$ and $t$ in $O(n)$ time. 
\end{restatable}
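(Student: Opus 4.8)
The plan is to obtain both outputs as simple postprocessing of the algorithm behind \cref{thm:main}, which already computes, in $O(n)$ time, the S-Connection of $T_s$ and $T_t$ together with the arrays $Level_s$ and $Level_t$.

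\emph{Solving \maxkproblem.} For every $k\ge 0$, the $k$-block of $s$ containing position $1$ is S-connected to the $k$-block of $t$ containing position $1$ if and only if $s\sim_k t$: one direction is the instance $i=j=1$ of the definition of S-connection, and for the converse note that the $k$-block of $s$ containing position $1$ consists, by construction, of exactly the suffixes of $s$ that are $\sim_k$-equivalent to $s$ (and likewise for $t$), so $s\sim_k t$ forces every suffix in the first block to be $\sim_k$-equivalent to every suffix in the second. Since the $0$-nodes are P-connected and children of P-connected nodes are P-connected, the block of $s$ containing position $1$ has a P-Connection on every level; as the algorithm of \cref{thm:main} splits every P-connected pair that is not S-connected, this block is S-connected precisely as long as it is not split, i.e.\ exactly for $k<Level_s[1]$. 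Hence the answer to \maxkproblem\ is $Level_s[1]-1$ when $Level_s[1]\neq\infty$, and otherwise $s\sim_k t$ for all $k$ (equivalently $s=t$, which one may also test directly); this value is read off in $O(1)$ time.

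\emph{A shortest distinguishing word.} Assume $s\neq t$ and set $k=Level_s[1]-1$. Because $s\sim_k t$, no word of length at most $k$ lies in the symmetric difference of the $k$-spectra of $s$ and $t$, whereas $s\nsim_{k+1}t$ produces such a word of length exactly $k+1$; thus $k+1$ is the minimum length, and we build a witness by a top-down recursion of depth $k+1$. We maintain a triple $(p,p',\ell)$ with the invariant $Level_s[p]=\ell$, the $(\ell-1)$-blocks of $s$ and $t$ containing $p$ and $p'$ are S-connected (equivalently $s[p:n]\sim_{\ell-1}t[p':n']$), and their $\ell$-blocks are split (equivalently $s[p:n]\nsim_\ell t[p':n']$); initially $(p,p',\ell)=(1,1,k+1)$, which is valid by the discussion above. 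At a step with $\ell\ge 2$, let $a_\ell$ and $b_\ell$ be the $\ell$-blocks containing $p$ and $p'$, and $a_{\ell-1}\supseteq a_\ell$, $b_{\ell-1}\supseteq b_\ell$ their parents, which are S-connected; since $a_\ell\nsim_\ell b_\ell$, \cref{lem:2words-block-equivalence} yields a letter $x$ with $s[\nextpos(a_\ell,x)+1:n]\nsim_{\ell-1}t[\nextpos(b_\ell,x)+1:n']$. We emit $x$ as the next letter of the output word and recurse with $p\gets\nextpos(a_\ell,x)+1$, $p'\gets\nextpos(b_\ell,x)+1$, $\ell\gets\ell-1$; by the analysis in the proof of \cref{lem:aksplit} the new positions were $(\ell-1)$-split while their $(\ell-2)$-blocks are S-connected, so the invariant is preserved (and, since $k+1$ is minimal, $Level_s$ of the new $p$ is exactly $\ell-1$). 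When $\ell=1$, the invariant gives $\alp(s[p:n])\neq\alp(t[p':n'])$, so we emit one letter in the symmetric difference and stop. The emitted word has length $k+1$ and lies in the symmetric difference of the spectra, as guaranteed inductively by \cref{lem:2words-block-equivalence}.

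\emph{Keeping the reconstruction linear.} To avoid recomputing the witness data along the recursion, we record it while running the algorithm of \cref{thm:main}: whenever a $(k+1)$-block $[p:q]$ of $s$, P-connected to $[p':q']$ of $t$, is split because of the interval-pair associated to a letter $x$ and a level-$k$ node, we store at that block the letter $x$ together with the positions $i=\nextpos_s(q,x)+1$ and $j=\nextpos_t(q',x)+1$ (for the level-$1$ splits found via \cref{lem:level1}, we store one letter witnessing the alphabet mismatch). Each such triple is available in $O(1)$ from the $\prevpos$, $\rightpos$, P-Connection, and next-occurrence values already computed in \cref{lem:init,lem:prev}, possibly augmented by the leftmost occurrence of each relevant letter inside the block ranges, which is obtained in $O(n)$ overall by a radix sort analogous to the one in \cref{lem:prev}; hence the extra bookkeeping stays within the $O(n)$ bound of \cref{thm:main}. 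The recursion then just follows these stored pointers, taking $O(k+1)=O(n)$ time in total, with the single $\ell=1$ step costing $O(|\Sigma|)=O(n)$. The main obstacle is exactly this last point: organising the bookkeeping during the first algorithm so that every witness letter and every recursion position can be retrieved in constant time, ensuring that the length-$(k+1)$ distinguishing word is produced without exceeding linear time.
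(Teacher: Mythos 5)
Your high-level plan matches the paper's: read off the answer to \maxkproblem\ as $Level_s[1]-1$, then build a shortest distinguishing word by a top-down recursion of depth $k+1$, at each step invoking \cref{lem:2words-block-equivalence} to obtain the next letter $x$ and advancing $p,p'$ past the first occurrences of $x$. That part is sound.

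The gap is in \emph{Keeping the reconstruction linear}. You store a witness letter (and jump positions) only ``whenever a $(k+1)$-block $[p:q]$ of $s$, \emph{P-connected} to $[p':q']$ of $t$, is split''. But the recursion can land on a pair $(a_\ell,b_\ell)$ of $\ell$-blocks containing $p$ and $p'$ that are \emph{not P-connected}: they are children of the S-connected parents $a_{\ell-1},b_{\ell-1}$ at \emph{different} child indices, so neither was ever paired with the other, and no witness for this pair exists in your bookkeeping. Your invariant already has this confusion baked in, since it treats ``$s[p:n]\nsim_\ell t[p':n']$'' and ``their $\ell$-blocks are split'' as equivalent, but ``split'' in the paper's terminology means ``was P-connected and then removed from the S-Connection'', which is strictly stronger. \cref{lem:2words-block-equivalence} still guarantees the \emph{existence} of $x$, but your $O(1)$ retrieval plan has nothing to return. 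The paper's proof of the Claim explicitly distinguishes these two sub-cases: when the $\ell$-blocks are P-connected it reads $x$ from satellite data attached to $Level_s,Level_t$; when they are not, it extracts $x$ by comparing the multisets of last letters of the children of $a_{\ell-1}$ and $b_{\ell-1}$ using a scratch array $Y$, at cost $O(|\alp(a_{\ell-1})|+|\alp(b_{\ell-1})|)$. Moreover, rather than forcing $O(1)$ per step, the paper charges the position search to $O((\nextpos(i,x)-i)+(\nextpos(j,x)-j))$ and observes the sum telescopes to $O(n)$, while the alphabet-scan terms sum to $O(|T_s|+|T_t|)=O(n)$ because the recursion visits distinct tree nodes; this makes the offline radix-sort bookkeeping you propose unnecessary. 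To repair your argument you would need to either add the non-P-connected sub-case (as the paper does) or prove it cannot occur along the recursion, and the latter seems false in general.
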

\begin{proof}
	To find a distinguishing word of minimum length we proceed as follows.
	
	We first run the algorithm of \cref{sec:s-connection-algo}
	and store all its additional data structures,
	including those produced in the preprocessing phase.
	We also use an additional array $Y$ with $|\Sigma|$ elements,
	all initialised to $0$.
	We will use this algorithm to show the following general claim.
	
	\subparagraph*{Claim.}
	Let $i$ be a position of $s$ and $j$ be a position of $t$, and let $k> 1$.
	Assume $s[i:n]\nsim_k t[j:n']$,
	but $s[i:n]\sim_{k-1} t[j:n']$,
	and that $i$ is included in the $(k-1)$-block $a=[m_a:n_a]$
	and $j$ is included in the $(k-1)$-block $b=[m_b:n_b]$.
	According to \cref{lem:2words-block-equivalence},
	there exists a letter $x$
	such that $s[\nextpos(i, x) + 1 : n] \nsim_{k-1} t[\nextpos(j, x) +1 : n']$;
	moreover, a word of length $k$ that distinguishes $i$ and $j$ starts with $x$.
	We can find the letter $x$ and its position in time
	$O(|\alp(a)|+|\alp(b)|+(\nextpos(i, x)-i)+(\nextpos(j, x)-j))$. 
	
	Note that $x$ occurs always after both $i$ and $j$ in their respective words,
	or $s[i:n] \nsim_{k-1} t[j:n']$ would hold.
	
	\subparagraph*{Proof of the claim.}	
	Indeed, we can find this $x$ and $\nextpos(i, x)$ and $\nextpos(j, x)$ as follows. 
	
	Assume first that $i$ and $j$ are included, respectively,
	in a pair of P-connected $k$-blocks $a'=[m_{a'}:n_{a'}]$ and $b'=[m_b':n_{b'}]$.
	Then, in the {\em iterated step} of the algorithm from \cref{sec:s-connection-algo},
	we have identified a letter $x$
	such that $s[\nextpos(a',x)+1:n]\nsim_{k-1} t[\nextpos(b',x)+1:n']$;
	this letter can be stored as a satellite information in the $Level_s$ and $Level_t$ arrays.
	It is easy to note that this letter $x$ appears in at most one of the factors $s[i:n_{a'}]$ and $t[j:n_{b'}]$.
	Clearly, in $O((\nextpos(i, x)-i)+(\nextpos(j, x)-j))$
	we can search letter by letter for the first occurrence of $x$ after $i$ and $j$, respectively;
	that is, we compute $\nextpos(i, x)$ and $\nextpos(j, x)$, respectively. 
	
	If $i$ and $j$ are included, respectively,
	in a pair of $k$-blocks $a'=[m_{a'}:n_{a'}], b'=[m_b':n_{b'}]$
	which are not P-connected,
	then there exists a letter $x$ that occurs in $a$ after $i$
	but does not occur in $b$ after $j$, or vice versa.
	This is found as follows:
	we produce the lists $L_a$ and $L_b$ of last letters of the $k$-blocks
	which partition the $(k-1)$ blocks $a$ and $b$, respectively,
	and remove those that occur to the left of $i$ and $j$, respectively.
	In $Y$, we set $Y[y]\gets 1$ for all $y$ that occur in $L_a$.
	Then, we set $Y[y]\gets Y[y]-1$ for all $y\in L_b$.
	If there is a letter $y$ of $L_a$ such that $Y[y]=1$,
	then we choose the letter $x$ which we were searching as $y$.
	Otherwise, there must be a letter $y$ of $L_b$
	such that $Y[y]=-1$,
	and we choose the letter $x$ which we were searching as $y$.
	Furthermore, we reset $Y[y]=0$ for all $y\in L_a\cup L_b$.
	Finding the letter $x$ as above takes $O(|\alp(a)|+|\alp(b)|)$ time.
	Furthermore, finding $\nextpos(i, x)$ and $\nextpos(j, x)$ takes
	$O((\nextpos(i, x)-i)+(\nextpos(j, x)-j))$ time. 
	
	Our claim follows.$\ \triangleleft  $
	
	Now, we show the main statement.
	According to \cref{thm:main},
	we compute a value $k$
	such that  $s[1:n]\nsim_k t[1:n']$,
	but $s[1:n]\sim_{k-1} t[1:n']$.
	Let us assume that $k>1$.
	Using our {\em claim},
	we find a letter $x_1$
	such that $s[\nextpos(1, x_1)+1:n] \nsim_{k-1} t[\nextpos(1, x_1)+1:n']$.
	Then, $s$ and $t$ will be distinguished by a word $w=x_1w'$, with $|w'|=k-1$. 
	This takes $O(|\alp(a_{k-1})|+|\alp(b_{k-1})|+(\nextpos(1, x_1)-1)+(\nextpos(1, x_1)-1))$ time,
	where $a_{k-1}$ is the $(k-1)$-block of $s$ in which $1$ is included,
	and $b_{k-1}$ is the $(k-1)$-block of $t$ in which $1$ is included. 
	
	We continue by searching the first letter $x_2$ of $w'$ as the first letter of a word distinguishing $s[\nextpos(1, x_1) + 1 : n]$ and $ t[\nextpos(1, x_1) +1 : n']$.
	Because we have $s[1:n]\nsim_k t[1:n']$,
	but $s[1:n]\sim_{k-1} t[1:n']$,
	it follows that $s[\nextpos(1, x_1) + 1 : n] \nsim_{k-1} t[\nextpos(1, x_1) +1 : n']$
	but $s[\nextpos(1, x_1) + 1 : n] \sim_{k-2} t[\nextpos(1, x_1) +1 : n']$
	(otherwise, $s[1:n]$ would be distinguished by a word of length $k-1$ from $t[1:n']$). 
	
	This means we can apply the {\em claim} for
	$i=\nextpos(1, x_1)+1$
	(included in the $k-2$ block $a_{k-2}$ of $s$)
	and $j=\nextpos(1, x_1) +1$
	(included in the $k-2$ block $b_{k-2}$ of  $t$)
	and $k-1$ instead of $k$.
	We repeat this until we reach two positions $i$ and $j$,
	and we need to find a single letter $x_k$ that distinguishes $s[i:n]$ and $t[j:n']$
	(that is, we reached $k=1$, and we need to find the last letter of the word $w$
	which distinguishes $s[1:n]$ and $t[1:n']$).
	This can be easily found in $O(n)$ by applying,
	e.g., a similar strategy as the one in the proof of the claim. 
	
	The overall time complexity is $O(n)$ as the time we use is: 
	\begin{align*}
		& O(n) \\
		+ & O(|\alp(a_{k-1})|+|\alp(b_{k-1})|\\
		& \ \  + (\nextpos(1, x_1)-1)+(\nextpos(1, x_1)-1))\\
		+ & O(|\alp(a_{k-2})|+|\alp(b_{k-2}|\\
		& \ \  + (\nextpos(\nextpos(1, x_1)+1, x_2)-\nextpos(1, x_1))\\
		& \ \  + (\nextpos(\nextpos(1, x_1)+1, x_2)-\nextpos(1, x_1)))\\
		+ & \ \ldots. 
	\end{align*}
	
	Adding this up, we get that the overall time we use is $O(|T_s|+|T_t|+n)=O(n)$.
\end{proof}

\section{Conclusions and future work}
\label{sec:conclusions}
In this paper,
we presented the first algorithm solving \maxkproblem\ in optimal time.
This algorithm is based on the definition and efficient construction of a novel data-structure:
the Simon-Tree associated to a word. 
Our algorithm constructs the respective Simon-Trees for the two input words of \maxkproblem,
and then establishes a connection between their nodes.
While the Simon-Tree is a representation of the classes induced,
for all $k\geq 1$,
by the $\sim_k$-congruences on the set of suffixes of a word,
this connection allows us to put together the classes induced by the respective congruences on the set of suffixes of both input word,
and to obtain, as a byproduct, the answer to \maxkproblem.\looseness=-1

The work presented in this paper can be continued naturally in several directions.
For instance, it seems interesting to us to compute efficiently,
for two words $s$ and $t$,
what is the largest $k$ such that $\SF_{\leq k} (s)\subseteq \SF (t)$. 
Similarly, one could consider the following pattern-matching problem:
given two words $s$ and $t$, and a number $k$,
compute efficiently all factors $t[i:j]$ of $t$ such that $t[i:j]\sim_k s$.
Finally, \kdecision\ could be extended to the following setting:
given a word $s$ and regular (or a context-free) language $L$, and a number $k$,
decide efficiently whether there exists a word $t\in L$
such that $s\sim_k t$.
A variant of \maxkproblem\ can be also considered in this setting:
given a word $s$ and regular (or a context-free) language $L$,
find the maximal $k$ for which there exists a word $t\in L$ such that $s\sim_k t$.\looseness=-1


\newpage
\bibliography{references}

\newpage
\appendix
\section{Computational model} \label{sec:CompModel}

In the standard Word Ram model of computation with words of size $\omega$, the memory consists
of memory-words consisting of $\omega$ bits each. 
Basic operations (including arithmetic and bitwise Boolean operations)
on memory-words take constant time,
and indirect addressing also takes constant time, that is given a memory-word storing an integer $a$ we can
access the memory-word $a$.
Denoting the size of the input by $n$, the input is given in the first $n$ memory-words,
and we assume $\omega=\Omega(\log n)$ as to allow for accessing all of its entries.
In some applications, $\omega$ might be much larger, see e.g.~\cite{BelazzouguiBN14},
however in our case it is natural to assume $\omega=\Theta(\log n)$.

In the Word RAM model with memory-word size $\Theta(\log n)$, the input to our problem
are two sequences of integers, each such integer stored in a single memory-word.
Under the assumption on the memory-word size, we can radix-sort a sequence in $O(n)$ time.
This allows us to replace every element by its rank in the sequence obtained
by sorting all elements occurring in the input and removing the duplicates.
Then, the alphabet can be assumed to be $\{1,\ldots,n\}$. A similar
renaming procedure can be applied for larger memory-word size $\omega$, as long we can
sort in linear time or allow randomisation, this is however not the focus
of this paper. 

It is not unusual to consider more restricted models of computation, in which
we forbid arbitrary arithmetic/bitwise manipulations on the input. As an extreme example,
on the one hand, we might be only allowed to test equality of the elements from the input, or, on the other hand, 
only check if one is smaller than the other. While there are nontrivial problems
that can be efficiently solved in the former model (the one allowing only equality testing), such as real-time pattern matching~\cite{BreslauerGM13},
testing if $s\sim_{1} t$ in this model requires $\Omega(n^{2})$ equality-tests.
To see this, observe that we can reduce checking equality of two sets $S$ and $T$
of size $\leq n$ to testing $s\sim_{1} t$ by writing down the elements in every set
as a sequence in an arbitrary order, and then apply a simple adversary-based lower bound
for the former. In the latter model (where general comparison-tests are allowed), known as the general ordered alphabet, the same
reduction shows that $\Omega(n\log n)$ comparison-tests are necessary~\cite{dobkin}. 
Accordingly, both \kdecision\ and \maxkproblem\ require $\Omega(n\log n)$ time in this model.

An interesting extension of our results is an efficient algorithm for the general ordered
alphabet, following previous papers on strings algorithms in this direction
(see \cite{dimaRuns,dimaCF,pawelLCE} and the references therein). 
An important point is that, while we restrict the operations on the input, the algorithm
is allowed to internally exploit, say, bit-parallelism. More formally,
the input is a sequence of elements from a totally ordered set ${\mathcal U}$
(i.e., string over ${\mathcal U}$).
The operations allowed in this model are those of the standard Word RAM model,
with one important restriction:
the elements of the input cannot be directly accessed nor stored in the memory used by the algorithms;
instead, we are only allowed to {\em compare}
(w.r.t. the order in ${\mathcal U}$)
any two elements of the input in constant time.
The lower bound of $\Omega(n\log n)$ comparisons by a reduction from set-equality still holds
in this model. To complement this, we first sort the input in $O(n\log n)$ time,
reduce the alphabet to $\{1,\ldots,n\}$, and then spend additional $O(n)$ time to solve
the problem. In this case, there is no need to use constant-time union-find and split-find data
structures, and a simple amortised logartihmic-time implementation suffices to achieve
$O(n\log n)$ time solution matching the lower bound.

\end{document}